\newtheorem{theorem}{Theorem}
\newtheorem{proposition}{Proposition}
\newtheorem{remark}{Remark}
\newtheorem{definition}{Definition}
\def\M{\mathcal{M}}     
\DeclareMathOperator{\Tr}{Tr}
\DeclareMathOperator{\Span}{Span}
\DeclareMathOperator{\Ker}{Ker}
\DeclareMathOperator{\Stab}{Stab}
\newcommand{\CC}{\mathbb{C}}
\newcommand{\RR}{\mathbb{R}}
\title{\bf Quantum channels, complex Stiefel manifolds, and optimization}
\author{Ivan Russkikh$^{1}$, Boris Volkov$^{1,2}$ and Alexander Pechen$^{1,2,3}$}
\date{}
\begin{document}
\maketitle

\noindent $^1$ Department of Mathematical Methods for Quantum Technologies, Steklov Mathematical Institute of Russian Academy of Sciences, 8 Gubkina Str., Moscow, 119991, Russia\\
$^2$ University of Science and Technology MISIS, 4 Leninsky Prosp., Moscow 119049, Russia\\
$^3$ Ivannikov Institute for System Programming of the Russian Academy of Sciences, Alexandra Solzhenitsyna str. 25, Moscow 109004, Russia\\
\\
E-mail: ivanrus2002@gmail.com, borisvolkov1986@gmail.com, apechen@gmail.com (corresponding author)

\begin{abstract} 
Most general dynamics of an open quantum system is commonly represented by a quantum channel, which is a completely positive trace-preserving map (CPTP or Kraus map). Well-known are the representations of quantum channels by Choi matrices and by Kraus operator-sum representation (OSR). As was shown before, one can use Kraus OSR to parameterize quantum channels by points of a suitable quotient under the action of the unitary group of some complex Stiefel manifold. In this work, we establish a continuity relation (homeomorphism) between the topological space of quantum channels and the quotient of the complex Stiefel manifold. Then the metric on the set of quantum channels induced by the Riemannian metric on the Stiefel manifold is defined. The established relation can be applied to various quantum optimization problems. As an example, we apply it to the analysis of extrema points for a wide variety of quantum control objective functionals defined on the complex Stiefel manifolds, including mean value, generation of quantum gates, thermodynamic quantities involving entropy, etc. 
\end{abstract}

\textbf{Keywords:} quantum channel, complex Stiefel manifold, Choi matrix, quantum optimization, quantum control

\maketitle

\section{Introduction}

Quantum channels, i.e.\ completely positive trace-preserving (CPTP) linear maps, also called Kraus maps, represent the most general dynamics of quantum systems~\cite{Kraus1983, Breuer_Book_1993, Nielsen_Chuang_2010, Wilde_Book_2017, Holevo_Book_2019} (non-CP maps are also sometimes considered~\cite{Pechukas_1994, SHAJI200548} and the problem of determining the most general conditions which lead to a CP evolution is studied, e.g.~\cite{Shabani_Lidar_2009}). Problems involving optimization of various cost functions over quantum channels appear in many areas of quantum physics, including quantum control~\cite{KochEPJQuantumTechnol2022, BrifNewJPhys2010, GoughPhilTransRSocA2012, KochJPhysCondensMatter2016}, gate generation for quantum computation~\cite{West_Lidar_Fong_Gyure_2010, SchulteHerbruggenSporlKhanejaGlaser2011, goerz2014}, quantum information~\cite{Hayden_Jozsa_Petz_Winter_2004,Yang_Horodecki_Wang_2008,Amosov2024}, quantum state discrimination~\cite{Kim_Kwon_2023}, quantum communication~\cite{Jones_Kirby_Brodsky_2018}, generalized quantum transport~\cite{Duvenhage_2021}, etc. Common parametrizations of quantum channels use Choi matrices~\cite{choi75} or Kraus operator-sum representation (OSR)~\cite{Kraus1983}. In~\cite{PechenJPA2008, OzaJPA2009} it was shown that Kraus maps, which fundamentally model the open quantum system dynamics, can be parameterized by points of special quotients of some complex Stiefel manifolds, which are particular manifolds with orthogonality constraints. More precisely, given a quantum channel $\Phi$ for an $n$-level quantum system, its action on any density matrix $\rho$ can be written using Kraus OSR as $\Phi(\rho)=\sum_{i=1}^{n^2} K_i\rho K^\dagger_i$ (as required by complete positivity), where $\sum_i K_i^\dagger K_i=\mathbb I$ (as required by trace preservation). That allows to construct matrix $K=(K_1;\dots ; K_{n^2})$ as a column composed of the matrices $K_i$. Matrix $K$ should satisfy $K^\dagger K=\mathbb I$, which is exactly the definition of the  complex Stiefel manifold $V_n(\mathbb{C}^{n^3})$. The construction is generalized to the case when input and output spaces have different dimensions $n$ and $m$. Since the same quantum channel can be represented using different Kraus OSRs, this non-uniqueness determines an equivalence of certain points of the complex Stiefel manifold $V_n(\mathbb{C}^{nm^2})$, precisely an equivalence under the action of the unitary group $\mathrm{U}(nm)$, which determines non-uniqueness of the Kraus OSR, and for one-to-one parametrization of quantum channels one has to consider a proper quotient of the complex Stiefel manifold under the action of this unitary group.   

The theory of optimization over manifolds with orthogonal constraints, including over Stiefel manifolds, was developed in the fundamental work~\cite{Edelman1998}. However, the case of complex Stiefel manifolds was not considered there. In Sec.~2.1, the authors explain that ``For simplicity of exposition, but for no fundamental reason, we will concentrate on real matrices. All ideas carry over naturally to complex matrices.'' And later in Sec. 4.6 they conclude, that ``In full generality the $X$ are complex, but the real case applies for physical systems of large extent that we envisage for this application, and we, accordingly, take $X$ to be real in this discussion.'' The well-known book~\cite{Absil2008} also treats only the case of real Stiefel manifolds. 

However, as was shown in~\cite{PechenJPA2008, OzaJPA2009}, complex Stiefel manifolds appear naturally in a wide field of physical phenomena described by quantum mechanics, namely in description of open quantum systems, where they (strictly speaking their special quotients) can be used to parameterize quantum channels. This motivates the practical importance of the development of optimization methods and algorithms over complex Stiefel manifolds. The corresponding theory was developed in~\cite{PechenJPA2008, OzaJPA2009}, where in particular explicit analytical expressions for gradient and Hessian of quantum control objectives representing the mean value of a quantum observable were derived, the quotient was considered and Riemannian and constrained optimization were studied. 

At the same time, a relation between parametrizations of quantum channels by points of the quotient of a complex Stiefel manifold and the common parametrization of quantum channels by Choi matrices has not been established. This work aims to study such relation. First, we rigorously establish for a general $n$-level quantum system a homeomorphism between the topological space of quantum channels and points of the quotient of the complex Stiefel manifold under the action of the unitary group (Theorem~\ref{homeomorhic}). We also show that the orbits of the unitary group action on the Stiefel manifold are diffeormorphic to the complex Stiefel manifolds $V_k(\mathbb{C}^{nm})$, where $k$ is the Kraus rank of the corresponding quantum channel (Theorem~\ref{Stiefel_repr}). Then, using the Riemannian metric on the complex Stiefel manifold, we introduce a distance and a metric on the set of quantum channels. An interesting question is to compare it with the diamond norm metric and the Bures distance.

Then, as an application, we study kinematic quantum control landscapes for various optimization problems involving quantum channels for open quantum systems. The established homeomorphism between Choi matrices and complex Stiefel manifolds allows to study local extrema of important quantum control objective functionals over Stiefel manifolds. We prove the absence of traps (i.e.\ local but not global extrema of the objective function) for a wide variety of quantum control phenomena. Previously, the absence of kinematic traps for optimization over quantum channels was proven only for the problem of maximizing the mean value of a quantum observable. For single-qubit systems, this was done by parameterising the set of quantum channels by points of the Stiefel manifold~\cite{PechenJPA2008} and using the Lagrange multiplier method. For an arbitrary dimension, the absence of kinematic traps was proven by lifting the quantum channel to the unitary dynamics in a larger space~\cite{WuPechen2008} using Stinespring dilation. Here we prove the absence of traps for a much broader class of functionals: not only for the functional describing the mean value of a quantum observable but also for various functionals describing generation of a quantum channel and more generally, any convex objective functionals for $n$-level open quantum systems for an arbitrary dimension $n$. Important is that since used parametrizations describe in general both Markovian and non-Markovian cases, the approach based on complex Stiefel manifolds applies to both these cases as well. Note that methods of convex analysis were previously used to study quantum control objective functionals defined on the convex set of density matrices, e.g. in~\cite{PechenRabitz2010}.
 
The paper is organized as follows. Sec.~\ref{Quantum_channels} contains general information on quantum channels and their representation by Choi matrices. In Sec.~\ref{Kraus_representation_and_Stiefel_manifold}, we provide preliminaries on Kraus OSR and complex Stiefel manifolds and present our main results on the homeomorphism between the space of quantum channels and the quotient space of the Stiefel manifold with respect to the action of the unitary group, as well as explicitly describe the orbits of this action. In Sec.~\ref{sec:metric}, the distance and the metric on the set of quantum channels which are induced by the Riemannian metric on the complex Stiefel manifold are introduced. In Sec.~\ref{Optimization}, various optimization problems over quantum channels are considered. In Sec.~\ref{Absence_of kinematic_traps}, applications of the obtained results are provided to the analysis of kinematic quantum control landscapes and proving absence of traps on the complex Stiefel manifolds for a broad class of objective cost functionals including the mean value of a quantum observable (proved before in~\cite{PechenJPA2008, WuPechen2008} by different methods), various functionals describing  generation of a quantum channel, and, more generally, any convex objective functionals for $n$-level open quantum systems of arbitrary dimension $n$. Proofs of Theorems~\ref{homeomorhic} and~\ref{Stiefel_repr} are given in Appendices~\ref{B} and~\ref{C}, respectively. Basic necessary information from convex analysis is summarized in Appendix~\ref{A}. Conclusions and Discussion Sec.~\ref{Conclusions} resumes this work.

\section{Quantum channels and Choi matrices}
\label{Quantum_channels}

In this section, we provide necessary preliminaries about quantum channels and Choi matrices. More information is available, e.g.\ in~\cite{Holevo_Book_2019,watrous2018theory}.

By $\|\mathord{\cdot}\|_F$ we denote the Frobenius (Hilbert-Schmidt) norm on the space $\CC^{n\times m}$ of complex $n\times m$ matrices, i.e.\ $\|A\|_F = \sqrt{\Tr(A^{\dagger}A)}$ for $A\in \CC^{n\times m}$. Denote the space of $n\times n$ complex matrices as $\mathcal{M}_n=\mathbb{C}^{n\times n}$ and the space of linear mappings from $\mathcal{M}_n$ to $\mathcal{M}_m$ as $\mathcal{L}(\mathcal{M}_n, \mathcal{M}_m)$. Since $\mathcal{L}(\mathcal{M}_n, \mathcal{M}_m)\cong \CC^{n^2\times m^2}$, we assume that this space is endowed with the Frobenius norm. By $A\geq 0$ we denote the fact that the matrix $A$ is Hermitian positive semi-definite.
\begin{definition}
   Let $\Phi\in\mathcal{L}(\mathcal{M}_n, \mathcal{M}_m)$. It is called positive if $\Phi(A)\geq 0$ for any $A\geq 0$. 
   It is called $k$-positive if the linear map $\mathrm{id}_{\mathcal{M}_k}\otimes\Phi\colon \mathcal{M}_k\otimes\mathcal{M}_n \to \mathcal{M}_k\otimes\mathcal{M}_m$ is positive (where the natural isomorphism $\mathcal{M}_k\otimes\mathcal{M}_n \cong \mathcal{M}_{kn}$ is used). It is called completely positive if it is $k$-positive for any $k\in \mathbb{N}$. 
\end{definition}

For a mapping $\Phi\colon \mathcal{M}_n \to \mathcal{M}_m$, the conditions of $n$-positivity and complete positivity are equivalent~\cite{Tomiyama_1982,Choi_Effros_1977,TAKASAKI_1982}.

\begin{definition}
   A linear map $\Phi\colon \mathcal{M}_n \to \mathcal{M}_m$ is called trace-preserving if $\Tr \Phi(A) = \Tr A$ for any $A\in \mathcal{M}_n$.
\end{definition}

\begin{definition}
    A linear map $\Phi\colon \mathcal{M}_n \to \mathcal{M}_m$ is called a quantum channel (or Kraus map) if $\Phi$ is completely positive and trace-preserving.
    We denote the set of all quantum channels acting from $\mathcal{M}_n$ to $\mathcal{M}_m$ as $\mathcal{C}han(n, m)$ (a notation $CPTP(n, m)$ is used in some works). If $n=m$, we use the notation $\mathcal{C}han(n)$.
\end{definition}

The set of quantum channels $\mathcal{C}han(n, m)$ is a subset of the space $\mathcal{L}(\mathcal{M}_n, \mathcal{M}_m)$, and the topology on $\mathcal{C}han(n, m)$ is induced by the standard topology on the vector space $\CC^{n^2 \times m^2}$. One can verify by direct calculations that $\mathcal{C}han(n, m)$ is a convex set.

\begin{definition}
   The Choi matrix $C_{\Phi}$ of a linear mapping $\Phi\colon \mathcal{M}_n \to \mathcal{M}_m$ is a $nm\times nm$ matrix defined by
    \begin{equation}
    C_{\Phi} = \sum_{i,j=1}^n E_{ij}\otimes \Phi(E_{ij}),
    \end{equation}
    where $E_{ij}$ denotes the matrix unit ($n\times n$ matrix with $1$ in the $ij$-th entry and zeros elsewhere).
    In other words, the Choi matrix is a block matrix of the form
    \begin{equation}
    C_{\Phi}= 
    \begin{bmatrix} 
    \Phi(E_{11}) & \dots & \Phi(E_{1n})\\
    \vdots & \ddots & \vdots\\
    \Phi(E_{n1}) & \dots & \Phi(E_{nn})
    \end{bmatrix}.
    \end{equation}
\end{definition}

The mapping $\Phi \mapsto C_{\Phi}$ is a linear isomorphism between the space of linear mappings from $\mathcal{M}_n$ to $\mathcal{M}_m$ and the space $\mathcal{M}_{nm}$. The following theorem, proved by Choi, characterizes completely positive mappings~\cite{choi75}.
\begin{theorem}[Choi]
    \label{choi-theorem}
    A linear mapping $\Phi\colon \mathcal{M}_n \to \mathcal{M}_m$ is completely positive if and only if its Choi matrix is positive semi-definite, i.e.\ $C_\Phi\geq0$.
\end{theorem}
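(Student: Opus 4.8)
The plan is to prove the two implications of Choi's criterion separately, using the already-noted fact that $\Phi\mapsto C_\Phi$ is a linear isomorphism together with the equivalence, for maps out of $\mathcal{M}_n$, of complete positivity and $n$-positivity.

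For the ``only if'' direction I would exhibit the Choi matrix as the image of a fixed positive operator under an ampliation of $\Phi$. Put $\Omega = \sum_{i,j=1}^n E_{ij}\otimes E_{ij}\in\mathcal{M}_n\otimes\mathcal{M}_n$; one checks directly that $\Omega = |\psi\rangle\langle\psi|$ with $|\psi\rangle=\sum_{i=1}^n e_i\otimes e_i$, so $\Omega\geq 0$, and that $C_\Phi=(\mathrm{id}_{\mathcal{M}_n}\otimes\Phi)(\Omega)$. If $\Phi$ is completely positive it is in particular $n$-positive, so $\mathrm{id}_{\mathcal{M}_n}\otimes\Phi$ is a positive map and therefore $C_\Phi\geq 0$.

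For the ``if'' direction, assume $C_\Phi\geq 0$ and write $C_\Phi=\sum_{k=1}^{r}|v_k\rangle\langle v_k|$ with $v_k\in\CC^n\otimes\CC^m$ (for instance from the spectral decomposition, $r=\mathrm{rank}\,C_\Phi$). Under the canonical identification $\CC^n\otimes\CC^m\cong\CC^{m\times n}$ each $v_k$ corresponds to a unique matrix $K_k\in\CC^{m\times n}$ with $v_k=\sum_{i=1}^n e_i\otimes(K_k e_i)$. Expanding $|v_k\rangle\langle v_k|$ block by block then gives
\begin{equation}
C_\Phi=\sum_{i,j=1}^n E_{ij}\otimes\Bigl(\sum_{k=1}^{r}K_k E_{ij}K_k^\dagger\Bigr),
\end{equation}
and comparing this with $C_\Phi=\sum_{i,j}E_{ij}\otimes\Phi(E_{ij})$ block by block, together with injectivity of $\Phi\mapsto C_\Phi$, forces $\Phi(E_{ij})=\sum_k K_k E_{ij}K_k^\dagger$, hence $\Phi(A)=\sum_{k=1}^{r}K_k A K_k^\dagger$ for every $A\in\mathcal{M}_n$ by linearity. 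Such a Kraus-form map is completely positive: for any $\ell\in\mathbb{N}$ and any $B\geq 0$ in $\mathcal{M}_\ell\otimes\mathcal{M}_n$ one has $(\mathrm{id}_{\mathcal{M}_\ell}\otimes\Phi)(B)=\sum_k(\I_\ell\otimes K_k)B(\I_\ell\otimes K_k)^\dagger$, and each summand is positive semi-definite because $XBX^\dagger\geq 0$ whenever $B\geq 0$, so the sum is as well.

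The only genuine obstacle is a bookkeeping one: fixing the index conventions in the identification $\CC^n\otimes\CC^m\cong\CC^{m\times n}$ so that the blockwise expansion of $|v_k\rangle\langle v_k|$ yields precisely $\sum_k K_k E_{ij}K_k^\dagger$ in the $(i,j)$ block, and being careful about the order of the tensor factors in $C_\Phi$. One could instead prove the ``if'' direction directly from positive semi-definiteness of $C_\Phi$ tested on vectors of the form $\sum_i e_i\otimes(A^{\dagger}e_i)$, but extracting the operators $K_k$ makes the complete-positivity conclusion most transparent and, moreover, is exactly the passage to the Kraus operator-sum representation used later in the paper.
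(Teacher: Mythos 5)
Your proof is correct: the ``only if'' direction via $C_\Phi=(\mathrm{id}_{\mathcal{M}_n}\otimes\Phi)(|\psi\rangle\langle\psi|)$ with $|\psi\rangle=\sum_i e_i\otimes e_i$ and the ``if'' direction via extracting Kraus operators from a rank-one decomposition of $C_\Phi$ constitute the standard argument, and every step checks out (note you only need the trivial implication that complete positivity gives $n$-positivity, not the full equivalence). The paper itself does not prove this theorem but cites Choi's original work; however, your reshaping $v_k\mapsto K_k$ with $v_k=\sum_i e_i\otimes(K_k e_i)$ is exactly the vector--matrix matching the paper records in Theorem~\ref{choi-decomposition-theorem}, equation~\eqref{vector-matrix-matching}, so your argument is fully consistent with the machinery used in Appendix~\ref{B}.
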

To give a complete description of Choi matrices of quantum channels, we recall the definition of partial trace.
\begin{definition}
     Partial trace over the second space $\Tr_2 \colon \mathcal{M}_n\otimes\mathcal{M}_m \to \mathcal{M}_n$ is the functional defined as 
    $\Tr_2(A\otimes B) = \Tr(B)A$ and extended linearly to all the elements of $\mathcal{M}_n\otimes\mathcal{M}_m$.
\end{definition}
The mapping $\Phi\colon \mathcal{M}_n \to \mathcal{M}_m$ is trace-preserving if and only if $\Tr(\Phi(E_{ij}))=\Tr(E_{ij})=\delta_{ij}$, i.e.\ if $\Tr_2( C_{\Phi})=\mathbb{I}_n$.
Therefore, a linear mapping $\Phi\colon \mathcal{M}_n \to \mathcal{M}_m$ is  a quantum channel if and only if $C_{\Phi}\geq 0$ and $\Tr_2(C_{\Phi})=\mathbb{I}_n$.

The set of Choi matrices for quantum channels from $\mathcal{M}_n$ to $\mathcal{M}_m$ will be denoted by $\mathcal{C}hoi(n, m)$. 
As the set $\mathcal{C}han(n,m)$ is convex and mapping $\Phi\mapsto C_{\Phi}$ is linear, the set $\mathcal{C}hoi(n,m)$ is also convex. It is easy to see that the set $\mathcal{C}hoi(n, m)$ is closed since it is the intersection of closed sets:
\begin{equation}
\mathcal{C}hoi(n, m) = \{A \in \mathcal{M}_n\otimes\mathcal{M}_m \mid A\geq 0\} \cap\{A \in \mathcal{M}_n\otimes\mathcal{M}_m \mid \Tr_2(A)=\mathbb{I}_n\}.
\end{equation}
The following evident statement guarantees that $\mathcal{C}hoi(n, m)$ is a bounded set. 

\begin{proposition}
    Let $C_{\Phi}=\{c_{ij}\}_{i, j=1}^{nm}$ be the Choi matrix of the quantum channel $\Phi$, $\Phi\in \mathcal{C}han(n, m)$. Then for all $i, j$ one has $|c_{ij}|\leq 1$.
\end{proposition}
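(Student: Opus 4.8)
The plan is to combine the two defining properties of the Choi matrix of a quantum channel that were established above: $C_\Phi \ge 0$ by Theorem~\ref{choi-theorem}, and $\Tr_2(C_\Phi) = \mathbb{I}_n$, which encodes trace preservation. I would first reduce the estimate on all entries to an estimate on the diagonal entries, and then read off the latter from the partial-trace condition together with positivity.

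For the reduction, recall the elementary fact that a Hermitian positive semi-definite matrix $A = \{a_{ij}\}$ has $a_{ii} \ge 0$ and $|a_{ij}|^2 \le a_{ii} a_{jj}$ for all $i, j$: the first holds because the diagonal entries are the values of the associated quadratic form on basis vectors, and the second because the $2\times 2$ principal submatrix on rows and columns $i, j$ is positive semi-definite, hence has nonnegative determinant $a_{ii} a_{jj} - |a_{ij}|^2$. Applying this to $C_\Phi$, it suffices to prove that every diagonal entry $c_{ii}$ satisfies $c_{ii} \le 1$.

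Next I would use the block structure of $C_\Phi$. Writing the index $i \in \{1, \dots, nm\}$ as a pair $(a, \alpha)$ with $a \in \{1, \dots, n\}$ indexing the block and $\alpha \in \{1, \dots, m\}$ the position within the block, one has $c_{(a,\alpha),(a,\alpha)} = [\Phi(E_{aa})]_{\alpha\alpha}$. The identity $\Tr_2(C_\Phi) = \mathbb{I}_n$, whose $(a,b)$ entry is $\Tr\Phi(E_{ab})$, gives in particular $\Tr\Phi(E_{aa}) = 1$ for each $a$, i.e.\ $\sum_{\alpha=1}^m [\Phi(E_{aa})]_{\alpha\alpha} = 1$. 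Since $C_\Phi \ge 0$, each term $[\Phi(E_{aa})]_{\alpha\alpha} = c_{(a,\alpha),(a,\alpha)}$ is a nonnegative real number, and a sum of nonnegative reals equal to $1$ has every term at most $1$. Hence all diagonal entries of $C_\Phi$ lie in $[0,1]$, and combining with the first step, $|c_{ij}| \le 1$ for all $i, j$.

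This argument is entirely elementary; the only minor care needed is in translating the partial-trace identity into a statement about the diagonal blocks of $C_\Phi$ and in recalling the standard bound $|a_{ij}|^2 \le a_{ii}a_{jj}$ for positive semi-definite matrices. I do not anticipate any real obstacle.
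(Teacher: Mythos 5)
Your proof is correct and follows essentially the same route as the paper: nonnegativity of $2\times 2$ principal minors gives $|c_{ij}|^2\leq c_{ii}c_{jj}$, and the condition $\Tr_2(C_\Phi)=\mathbb{I}_n$ combined with nonnegativity of the diagonal gives $0\leq c_{ii}\leq 1$. You merely spell out the block-index bookkeeping behind the diagonal bound a little more explicitly than the paper does.
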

\begin{proof}
One can use Sylvester's criterion for positive semi-definite matrices (see e.g.~\cite{meyer01}): the matrix is positive semi-definite if and only if all its principal minors are non-negative.
It implies that all diagonal elements of $C_{\Phi}$ are non-negative. Since $\Tr_2( C_{\Phi})=\mathbb{I}_n$, one has $0\leq c_{ii} \leq 1$. Applying  Sylvester’s criterion to the principal minor corresponding to the set of indices $\{i, j\}$, one gets $|c_{ij}|^2\leq c_{ii}c_{jj}\leq 1$.
\end{proof}

To sum up, $\mathcal{C}hoi(n, m)$ is convex, closed and bounded, therefore it is a compact convex set.

\begin{proposition}
\label{choi-chan-iso}        
The mapping $\Phi \mapsto C_{\Phi}$ is a convex-linear isomorphism of the sets $\mathcal{C}han(n, m)$ and $\mathcal{C}hoi(n, m)$. In particular, these sets are homeomorphic as topological spaces (with the topologies induced from the vector spaces), and the set $\mathcal{C}han(n, m)$ is a compact convex set.
\end{proposition}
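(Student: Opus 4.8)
The plan is to observe that essentially all the substantive work has already been done, and the proposition is a repackaging of it. By the discussion following the definition of the Choi matrix, the assignment $T\colon \Phi\mapsto C_{\Phi}$ is a \emph{linear} isomorphism from $\mathcal{L}(\mathcal{M}_n,\mathcal{M}_m)$ onto $\mathcal{M}_{nm}$. By Theorem~\ref{choi-theorem} together with the partial-trace characterization of trace preservation recalled above, a linear map $\Phi$ belongs to $\mathcal{C}han(n,m)$ if and only if $C_{\Phi}\geq 0$ and $\Tr_2(C_{\Phi})=\mathbb{I}_n$, i.e.\ if and only if $C_{\Phi}\in\mathcal{C}hoi(n,m)$. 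Hence $T$ maps $\mathcal{C}han(n,m)$ \emph{onto} $\mathcal{C}hoi(n,m)$, and since $T$ is injective on the whole ambient space, its restriction $T\colon\mathcal{C}han(n,m)\to\mathcal{C}hoi(n,m)$ is a bijection.

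Next I would verify the convex-linear and topological properties. Linearity of $T$ gives $T(\lambda\Phi_1+(1-\lambda)\Phi_2)=\lambda T(\Phi_1)+(1-\lambda)T(\Phi_2)$ for all $\lambda\in[0,1]$, and the same identity holds for $T^{-1}$; thus $T$ is a convex-linear (affine) isomorphism carrying convex combinations to convex combinations in both directions. For the topology: a linear map between finite-dimensional normed spaces is automatically continuous, and so is $T^{-1}$; restricting continuous maps to subsets equipped with the subspace topology keeps them continuous, so $T\colon\mathcal{C}han(n,m)\to\mathcal{C}hoi(n,m)$ is a homeomorphism.

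Finally, the structural consequences follow by transport along this homeomorphism. It was shown above that $\mathcal{C}hoi(n,m)$ is closed, bounded, and convex, hence a compact convex set; since $T^{-1}$ is a continuous bijection preserving convex combinations, $\mathcal{C}han(n,m)=T^{-1}(\mathcal{C}hoi(n,m))$ is likewise compact and convex. (Convexity of $\mathcal{C}han(n,m)$ was already noted directly by inspection of the definition; here it also drops out for free.)

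I expect no genuine obstacle: the entire content is a combination of Choi's theorem, the partial-trace criterion for trace preservation, and the elementary fact that linear bijections of finite-dimensional spaces are homeomorphisms. The only point requiring a moment's care is the identification $T(\mathcal{C}han(n,m))=\mathcal{C}hoi(n,m)$, which is precisely the ``if and only if'' characterization of quantum channels via their Choi matrices — so even this step is already in hand.
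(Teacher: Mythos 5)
Your proposal is correct and follows exactly the route the paper intends: the proposition is stated without a separate proof precisely because it assembles the facts already established (the linear isomorphism $\Phi\mapsto C_{\Phi}$, the characterization $C_{\Phi}\geq 0$ and $\Tr_2(C_{\Phi})=\mathbb{I}_n$ of channels, and the compactness and convexity of $\mathcal{C}hoi(n,m)$), together with the standard observation that linear bijections of finite-dimensional spaces are homeomorphisms. Nothing is missing.
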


While the case $n=m$ is often considered, important examples of quantum channels exist in $Chan(n,m)$ when $n\ne m$. Two classes of such channels are given below, the first with $n>m$ and the second with $n<m$.

1. Partial trace. The partial trace over the second space $\Tr_2\colon \M_k\otimes\M_l \to \M_k$ is a completely positive trace-preserving map and thus an example of quantum channel from $\mathcal{C}han(kl,k)$. Similarly, the partial trace over the first space $\Tr_1\colon \M_k\otimes\M_l \to \M_l$ is an example of quantum channel from $Chan(kl,l)$.

2. Quantum erasing channel. For a single qubit, the erasing channel is a channel in $Chan(2,3)$ defined by
\[
\Phi(\rho)=(1-\varepsilon)\rho+\varepsilon\Tr(\rho) |e\rangle\langle e|,
\]
where $\varepsilon\in[0,1]$ and $|e\rangle$ is the flag (erasure) state which is orthogonal to both $|0\rangle$ and $|1\rangle$. 
It leaves the qubit unchanged with probability $1-\varepsilon$ and <<erases>> it with probability $\varepsilon$. 

The phase erasing channel for a single qubit is a channel which describes that with probability $\varepsilon$ the phase of the qubit is erased without disturbing its amplitude. This quantum channel maps one-qubit to two-qubit states, i.e.\ it belongs to $\mathcal{C}han(2,4)$. In this setup, the second output qubit serves as a flag indicating whether the first qubit's phase has been randomized. This channel acts as
\[
\Phi(\rho)=(1-\varepsilon)\rho\otimes |0\rangle\langle 0|+\varepsilon\frac{\rho+\sigma_z\rho\sigma_z}{2}\otimes |1\rangle\langle 1|.
\]

\section{Kraus representation and Stiefel manifolds}
\label{Kraus_representation_and_Stiefel_manifold}

In this section, we introduce necessary preliminaries about Kraus operator-sum representation and Stiefel manifolds. We also outline our main results on the homeomorphism between the space of quantum channels and the quotient space of the Stiefel manifold by the unitary group action for any dimension $n$, as well as describe the orbits of this action.

Quantum channels of an $n$-level quantum system can be parameterized by points of some complex Stiefel manifold~\cite{PechenJPA2008, WuPechen2008}. This parameterization comes from the following representation of completely positive mapping called Kraus OSR (operator sum representation)~\cite{Kraus1983}.

\begin{theorem}[Choi]
\label{Kraus-representationn-theorem}
    A linear mapping $\Phi\colon \mathcal{M}_n \to \mathcal{M}_m$ is completely positive if and only if for some $k\in \mathbb{N}$ there exist operators $K_l\colon \CC^n \to \CC^m$, $l\in\{1,\ldots,k\}$, such that for all $A$ in $\mathcal{M}_n$ the following holds: 
    \begin{equation}
    \label{cp-property}
    \Phi(A)=\sum_{l=1}^{k} K_l A K_l^{\dagger}.
    \end{equation}
\end{theorem}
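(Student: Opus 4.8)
\emph{Proof proposal.} The plan is to derive the operator-sum representation from Choi's characterization of complete positivity (Theorem~\ref{choi-theorem}), combined with the spectral decomposition of a positive semi-definite matrix and the standard ``unvectorization'' identification of $\CC^n\otimes\CC^m$ with the space of operators from $\CC^n$ to $\CC^m$. The ``$\Leftarrow$'' implication is immediate: if $\Phi(A)=\sum_{l=1}^k K_l A K_l^{\dagger}$ for all $A\in\M_n$, then for every $p\in\mathbb N$ and every $X\in\M_p\otimes\M_n$ with $X\ge 0$ one has $(\mathrm{id}_{\M_p}\otimes\Phi)(X)=\sum_{l=1}^k (\mathbb I_p\otimes K_l)\,X\,(\mathbb I_p\otimes K_l)^{\dagger}$, and each summand, being of the form $VXV^{\dagger}$, is positive semi-definite; a finite sum of positive semi-definite matrices is positive semi-definite, so $\mathrm{id}_{\M_p}\otimes\Phi$ is positive for all $p$, i.e.\ $\Phi$ is completely positive.

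For the ``$\Rightarrow$'' implication, assume $\Phi$ is completely positive. By Theorem~\ref{choi-theorem} the Choi matrix $C_\Phi\in\M_{nm}$ satisfies $C_\Phi\ge 0$, so it admits a decomposition $C_\Phi=\sum_{l=1}^k v_l v_l^{\dagger}$ with $v_l\in\CC^{nm}$ and $k=\mathrm{rank}\,C_\Phi\le nm$ (for instance $v_l=\sqrt{\lambda_l}\,u_l$, where $\lambda_l>0$ and $u_l$ are the positive eigenvalues and corresponding orthonormal eigenvectors of $C_\Phi$). Identifying $\CC^{nm}$ with $\CC^n\otimes\CC^m$ and denoting by $\{e_i\}_{i=1}^n$ the standard basis of $\CC^n$, expand $v_l=\sum_{i=1}^n e_i\otimes(K_l e_i)$, which unambiguously defines operators $K_l\colon\CC^n\to\CC^m$.

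It then remains to check that $\Phi(A)=\sum_{l=1}^k K_l A K_l^{\dagger}$, and by linearity of both sides it suffices to verify this on the matrix units $A=E_{ij}$. By the definition of the Choi matrix, the $(i,j)$-block of $C_\Phi$ in the natural block decomposition of $\M_n\otimes\M_m$ is precisely $\Phi(E_{ij})$; on the other hand $v_l v_l^{\dagger}=\sum_{i,j=1}^n (e_i e_j^{\dagger})\otimes (K_l e_i)(K_l e_j)^{\dagger}=\sum_{i,j=1}^n E_{ij}\otimes K_l E_{ij} K_l^{\dagger}$, so the $(i,j)$-block of $\sum_{l} v_l v_l^{\dagger}$ is $\sum_{l=1}^k K_l E_{ij} K_l^{\dagger}$. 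Comparing blocks in $C_\Phi=\sum_{l} v_l v_l^{\dagger}$ gives $\Phi(E_{ij})=\sum_{l=1}^k K_l E_{ij}K_l^{\dagger}$ for all $i,j$, which proves the claim — and in fact shows one may always take $k=\mathrm{rank}\,C_\Phi\le nm$. No step here is genuinely difficult once Theorem~\ref{choi-theorem} is available; the one point requiring care is to fix the unvectorization convention $e_i\otimes w\leftrightarrow(e_i\mapsto w)$ consistently with the block layout $C_\Phi=\sum_{i,j}E_{ij}\otimes\Phi(E_{ij})$, so that the block identity comes out without stray transposes or conjugates.
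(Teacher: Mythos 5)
Your proof is correct and is essentially the standard argument the paper itself relies on: the paper states this theorem without proof (citing Choi and Kraus), and the construction you give — writing $C_{\Phi}=\sum_{l}v_{l}v_{l}^{\dagger}$ via Theorem~\ref{choi-theorem} and unvectorizing each $v_{l}$ into a Kraus operator — is precisely the content of Theorem~\ref{choi-decomposition-theorem} in Appendix~\ref{B}, with your convention $v_{l}=\sum_{i}e_{i}\otimes(K_{l}e_{i})$ matching the column-reshaping in~\eqref{vector-matrix-matching}. Both directions are handled correctly, including the easy converse via $\sum_{l}(\mathbb{I}_{p}\otimes K_{l})X(\mathbb{I}_{p}\otimes K_{l})^{\dagger}\geq 0$.
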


Operators $K_l$ are called Kraus operators, and the decomposition~\eqref{cp-property} is called Kraus operator-sum representation. In the sum above no more than $nm$ operators are needed. For a given completely positive mapping $\Phi$, the minimum number of Kraus operators required in the Kraus representation is called the Kraus rank (or the Choi rank) of $\Phi$. It coincides with the rank of the Choi matrix $C_{\Phi}$ \cite{choi75}.

The trace-preserving property of $\Phi$ is equivalent to the condition~\cite{Kraus1983} 
\begin{equation}
\label{tp-property}
    \sum_{l=1}^{k} K_l^\dagger K_l = \mathbb{I}_n.
\end{equation}
Conditions \eqref{cp-property} and \eqref{tp-property} together
are necessary and sufficient for $\Phi$ to be a quantum channel~\cite{Kraus1983}. For any quantum channel $\Phi$ we can always consider that a set of Kraus operators $(K_1,\ldots, K_k)$ from the Kraus representation~\eqref{cp-property} has exactly $nm$ operators, i.e.\ $k=nm$, possibly setting some Kraus operators to zero.
Then the Kraus representation allows quantum channels to be parameterized by points of the complex Stiefel manifold.

\begin{definition}
The complex Stiefel manifold is the set 
    \begin{equation}
    V_k(\mathbb{C}^n) = \{A\in \mathbb{C}^{n\times k} : A^{\dagger} A = \mathbb{I}_k\}
    \end{equation}
    with the natural smooth structure induced from $\mathbb{C}^{n\times k}$.
\end{definition}

\begin{remark}
The Stiefel manifold $V_k(\mathbb{C}^n)$
can be defined as the manifold of isometries from $\mathbb{C}^k$ to $\mathbb{C}^n$ or the manifold of ordered orthonormal $k$-tuples in $\mathbb{C}^n$.
 It is a homogeneous space: $V_k(\mathbb{C}^n) \cong \mathrm{U}(n)/\mathrm{U}(n-k)$.
\end{remark}

Consider a block matrix of the size $nm^2 \times n$ associated with a particular Kraus representation
\begin{equation}
\label{K}
K = 
\begin{pmatrix}
    K_1\\
    K_2\\
    \vdots\\
    K_{nm}
\end{pmatrix}.
\end{equation}
The condition $\sum_{l=1}^{nm} K_l^{\dagger} K_l = \mathbb{I}_n$ is equivalent to $K^\dagger K = \mathbb{I}_n$. Therefore, $K$ is a point of the complex Stiefel manifold $V_n(\CC^{nm^2})$.

Kraus OSR can be used to apply Riemannian optimization methods to find extrema points~\cite{OzaJPA2009, Luchnikovetc2021a}. Randomized gradient descents on Riemannian manifolds for quantum optimization in other context have been studied in~\cite{MalvettiArenz2024}.

\begin{remark}
The Kraus representation is closely related to the Stinespring  dilation~\cite{Stinespring, Holevo_Book_2019,watrous2018theory}. In particular, a linear mapping $\Phi\colon \mathcal{M}_n \to \mathcal{M}_m$ is  a quantum channel if and only if for some $k$ there is an isometry $U\colon \CC^{n} \to \CC^{m}\otimes \CC^{k}$ such that
$$
\Phi(A)=\Tr_2 (UAU^{\dagger}).
$$
For a given channel $\Phi$ the minimum possible number of the dimension $k$ of the environment space is its Kraus rank. If we choose $k=nm$, then the set of all quantum channels from $\mathcal{M}_n$ to $\mathcal{M}_m$ can be parameterized by isometries from $\CC^{n}$ to $\CC^{nm^2}\cong \CC^{m}\otimes \CC^{nm}$. Such isometries (so-called Stinespring isometries) are precisely defined by matrices~\eqref{K} and are points of the Stiefel manifold $V_n(\CC^{nm^2})\cong \mathrm{U}(nm^2)/\mathrm{U}(nm^2-n)$.
\end{remark}

For a quantum channel, the Kraus representation~\eqref{cp-property} with constraints~\eqref {tp-property} is non-unique. The following theorem describes whether two sets of Kraus operators define the same quantum channel~\cite{Nielsen_Chuang_2010}.

\begin{theorem}
    Two sets of Kraus operators $(K_1, \dots, K_{nm})$ and $(\widetilde{K}_1, \dots, \widetilde{K}_{nm})$ define the same quantum channel if and only if there exists a unitary matrix $U\in\mathrm{U}(nm)$ with elements $(u_{ij})_{i,j=1}^{nm}$ such that
    \begin{equation}
        \label{unitary-stiefel}
        \widetilde{K}_i = \sum_{j=1}^{nm} u_{ij}K_j.
    \end{equation}
\end{theorem}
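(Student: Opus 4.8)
The plan is to reduce the statement to a standard linear-algebra fact about the freedom in decomposing a fixed positive semi-definite matrix as a sum of rank-one terms, transported through the Choi isomorphism and the vectorization of Kraus operators.

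First I would dispose of the ``if'' direction by direct computation. Assuming $\widetilde K_i = \sum_{j} u_{ij} K_j$ with $U = (u_{ij}) \in \mathrm{U}(nm)$, for any $A \in \mathcal{M}_n$ one expands
\[
\sum_{i=1}^{nm} \widetilde K_i A \widetilde K_i^\dagger = \sum_{i=1}^{nm}\sum_{j,l=1}^{nm} u_{ij}\,\overline{u_{il}}\, K_j A K_l^\dagger = \sum_{j,l=1}^{nm} (U^\dagger U)_{lj}\, K_j A K_l^\dagger = \sum_{j=1}^{nm} K_j A K_j^\dagger,
\]
using $\sum_i u_{ij}\overline{u_{il}} = (U^\dagger U)_{lj} = \delta_{lj}$; the analogous computation applied to $\sum_i \widetilde K_i^\dagger \widetilde K_i$ shows the trace-preservation constraint \eqref{tp-property} is inherited, so $(\widetilde K_i)$ is again a Kraus family for the same channel.

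For the ``only if'' direction, suppose $(K_i)$ and $(\widetilde K_i)$ define the same channel $\Phi$. By the linearity of $\Phi \mapsto C_\Phi$ (Proposition~\ref{choi-chan-iso}), this is equivalent to $C_\Phi$ being computed identically from either family. Writing $v_l := \sum_{i=1}^n |i\rangle \otimes K_l|i\rangle \in \mathbb{C}^n \otimes \mathbb{C}^m \cong \mathbb{C}^{nm}$ for the vectorization of $K_l$, a short computation from the definition of the Choi matrix gives $C_\Phi = \sum_{l=1}^{nm} v_l v_l^\dagger$, and likewise $C_\Phi = \sum_{l=1}^{nm} \widetilde v_l \widetilde v_l^\dagger$ with $\widetilde v_l$ the vectorization of $\widetilde K_l$. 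The key lemma to prove is then: if two families $(v_1,\dots,v_N)$ and $(\widetilde v_1,\dots,\widetilde v_N)$ of vectors in $\mathbb{C}^N$ satisfy $\sum_l v_l v_l^\dagger = \sum_l \widetilde v_l \widetilde v_l^\dagger$, then there exists $U \in \mathrm{U}(N)$ with $\widetilde v_i = \sum_j u_{ij} v_j$. To see this, diagonalize the common positive semi-definite matrix $P := \sum_l v_l v_l^\dagger = \sum_{a=1}^r \lambda_a e_a e_a^\dagger$ with $\lambda_a > 0$ and $\{e_a\}$ orthonormal; since $\operatorname{range} P = \Span\{v_l\} = \Span\{\widetilde v_l\}$, one can write $v_l = \sum_{a} A_{la} e_a$ and $\widetilde v_l = \sum_a \widetilde A_{la} e_a$, and comparing the two decompositions of $P$ coefficient-wise in the basis $\{e_a e_b^\dagger\}$ forces $A^\dagger A = \widetilde A^\dagger \widetilde A = \operatorname{diag}(\lambda_1,\dots,\lambda_r)$. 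Hence $A\,\operatorname{diag}(\lambda_a^{-1/2})$ and $\widetilde A\,\operatorname{diag}(\lambda_a^{-1/2})$ are $N\times r$ matrices with orthonormal columns; completing each to an $N\times N$ unitary and composing one with the adjoint of the other yields a $U \in \mathrm{U}(N)$ with $\widetilde A = U A$, i.e.\ $\widetilde v_i = \sum_j u_{ij} v_j$. Applying this with $N = nm$ and using that vectorization $K \mapsto v$ is a linear isomorphism $\mathbb{C}^{m\times n} \to \mathbb{C}^{nm}$, the relation $\widetilde v_i = \sum_j u_{ij} v_j$ translates into $\widetilde K_i = \sum_j u_{ij} K_j$, as claimed.

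The main obstacle is the lemma on unitary freedom in decompositions of a positive semi-definite matrix, together with the bookkeeping it requires when $\Phi$ has Kraus rank $k < nm$: then $r = k$, several $v_l$ (resp.\ $\widetilde v_l$) vanish, and $U$ is pinned down only on the $k$-dimensional subspace $\operatorname{range} C_\Phi$ (in the vectorized picture) while being arbitrary on its orthogonal complement. One must also check that degenerate eigenvalues $\lambda_a$ cause no trouble, which is automatic since the argument uses only a common eigenbasis of $P$. The identity $C_\Phi = \sum_l v_l v_l^\dagger$ and the injectivity of vectorization are routine and follow directly from the definitions.
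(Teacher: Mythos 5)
Your proof is correct. The paper itself states this theorem without proof, citing Nielsen--Chuang, so there is no in-paper argument to compare against; your route --- vectorizing the Kraus operators so that $C_\Phi=\sum_l v_lv_l^\dagger$ and then invoking the unitary freedom in rank-one decompositions of a fixed positive semi-definite matrix --- is the standard one and is exactly the machinery the paper itself develops in Appendix~\ref{B} (Theorem~\ref{choi-decomposition-theorem} and Proposition~\ref{kraus-as-root}). Both directions check out, including the handling of Kraus rank $k<nm$ via zero rows of the coefficient matrix $A$.
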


Thus the unitary group $\mathrm{U}(nm)$ acts on the complex Stiefel manifold $V_n(\CC^{nm^2})$ according to the formula \eqref{unitary-stiefel}. Orbits of this action are in one-to-one correspondence with quantum channels from $\mathcal{C}han(n, m)$. In other words, there is a bijection $h$ from the quotient space $V_n(\CC^{nm^2})/\mathrm{U}(nm)$ to $\mathcal{C}han(n, m)$.

Recall the definition of the quotient topology.

\begin{definition}
    Let $X$ be a topological space, $\sim$ an equivalence relation on $X$, $\pi\colon X \to X/{\sim}$ a quotient map. The quotient topology on the set $X/{\sim}$ is the strongest topology for which the quotient map $\pi$ is continuous.
    Equivalently, a quotient topology on a set $X/{\sim}$ is a topology in which a subset $U\subset X/{\sim}$ is open if and only if the subset $\pi^{-1}(U)$ is open in $X$. 
\end{definition}

A special case of the quotient space is orbit space. If a group $G$ acts on the topological space $X$, it produces an equivalence relation "lie in the same orbit", and the topology on the space of orbits $X/G$ is defined according to the previous definition. Let us formulate one more fact about the quotient topology that will be required to prove the main results. The following theorem characterizes the universal property of the quotient topology~\cite{Bradley2020}.

\begin{theorem}[universal property of quotient topology]
\label{universal-property-quotient}
Let $X$ be a topological space and $\pi\colon X \to X/{\sim}$ be the quotient mapping onto the corresponding quotient space. Let $f\colon X/{\sim} \to Z$ be some mapping, where $Z$ is an arbitrary topological space. Then $f$ is continuous if and only if $f\circ \pi\colon X \to Z$ is continuous.
\end{theorem}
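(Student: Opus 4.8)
The plan is to prove the two implications separately, with the forward direction being immediate from the definition of the quotient topology and the reverse direction resting on the elementary set-theoretic identity $\pi^{-1}(f^{-1}(U)) = (f\circ\pi)^{-1}(U)$, valid for any $U\subseteq Z$.

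For the forward implication, I would assume $f$ is continuous. Since the quotient topology on $X/{\sim}$ is by definition the strongest topology making $\pi$ continuous, the map $\pi$ is in particular continuous, and hence $f\circ\pi$ is a composition of two continuous maps and therefore continuous.

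For the reverse implication, I would assume $f\circ\pi\colon X\to Z$ is continuous and take an arbitrary open set $U\subseteq Z$; the goal is to show $f^{-1}(U)$ is open in $X/{\sim}$. Here I would invoke the equivalent description of the quotient topology from the Definition above: a subset $V\subseteq X/{\sim}$ is open if and only if $\pi^{-1}(V)$ is open in $X$. Applying this with $V=f^{-1}(U)$ and using $\pi^{-1}(f^{-1}(U))=(f\circ\pi)^{-1}(U)$, which is open in $X$ precisely because $f\circ\pi$ is continuous, I conclude that $f^{-1}(U)$ is open in $X/{\sim}$. As $U$ was an arbitrary open subset of $Z$, the map $f$ is continuous.

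I do not expect any genuine obstacle here: the entire content is the definition of the quotient topology combined with the commutation of preimages with composition of maps. The only point deserving a word of care is that the reverse direction genuinely needs the ``open iff preimage open'' characterization of the quotient topology (not merely the weaker statement that $\pi$ is continuous), and this is exactly the feature that makes the argument go through.
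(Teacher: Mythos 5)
Your proof is correct and complete; the paper itself does not prove this theorem but only cites a textbook reference, and your argument is exactly the standard one: the forward direction from continuity of $\pi$ plus composition, and the reverse direction from the identity $\pi^{-1}(f^{-1}(U))=(f\circ\pi)^{-1}(U)$ together with the ``open iff preimage open'' characterization of the quotient topology. Your closing remark correctly identifies the one point of substance, namely that the reverse implication needs the full characterization of the quotient topology and not merely the continuity of $\pi$.
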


We have two maps: the quotient map $\pi\colon V_n(\CC^{nm^2})\to V_n(\CC^{nm^2})/\mathrm{U}(nm)$ and the map $\widetilde{\pi}\colon V_n(\CC^{nm^2})\to \mathcal{C}han(n, m)$ defined in a following way. $\widetilde{\pi}$ maps a point of Stiefel manifold $(K_1, \dots, K_{nm})$ to the quantum channel $\Phi$ which acts as $\Phi(\rho)=\sum_{l=1}^{nm} K_l \rho K_l^{\dagger}$. As mentioned earlier, there exists a natural bijection $h$ between $V_n(\CC^{nm^2})/\mathrm{U}(nm)$ and $\mathcal{C}han(n, m)$. Moreover, the following equality holds: $\widetilde{\pi}=h\circ \pi$. In fact, $h$ turns out to be a homeomorphism.

\begin{theorem}
\label{homeomorhic}
    Topological spaces $V_n(\CC^{nm^2})/\mathrm{U}(nm)$ and  
    $\mathcal{C}han(n, m)$ are homeomorphic, and the homeomorphism is carried out by the mapping $h$ described above. This corresponds to the following commutative diagram in the category of topological spaces:
    \begin{equation}
    \def\labelstyle{\textstyle}
    \xymatrix{
    & V_n(\CC^{nm^2}) \ar[dl]_{\pi} \ar[dr]^{\widetilde{\pi}} & \\
    V_n(\CC^{nm^2})/\mathrm{U}(nm) \ar[rr]_h^{\cong}& & \mathcal{C}han(n, m)
    }
    \end{equation}
Moreover, the mappings $\pi\colon V_n(\CC^{nm^2})\to V_n(\CC^{nm^2})/\mathrm{U}(nm)$ and $\widetilde{\pi}\colon V_n(\CC^{nm^2})\to \mathcal{C}han(n, m)$ are open.   
\end{theorem}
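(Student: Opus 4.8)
The plan is to route everything through the commutativity $\widetilde{\pi}=h\circ\pi$ and the compactness of the complex Stiefel manifold, so that the only genuinely analytic input needed is the continuity of the obvious map $\widetilde{\pi}$. First I would check that $\widetilde{\pi}$ is continuous as a map from $V_n(\CC^{nm^2})$ into $\mathcal{C}han(n, m)\subset\mathcal{M}_{nm}$: the Choi matrix of $\widetilde{\pi}(K_1,\dots,K_{nm})$ equals $\sum_{i,j=1}^n E_{ij}\otimes\sum_{l=1}^{nm}K_l E_{ij}K_l^{\dagger}$, whose entries are quadratic polynomials in the entries of the block matrix $K$ in~\eqref{K}; its image indeed lands in $\mathcal{C}han(n, m)$ by Theorem~\ref{Kraus-representationn-theorem} together with condition~\eqref{tp-property}. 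Then, since $\pi$ is a quotient map and $\widetilde{\pi}=h\circ\pi$ is continuous, the universal property of the quotient topology (Theorem~\ref{universal-property-quotient}) yields immediately that $h$ is continuous.

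The heart of the argument is to upgrade the continuous bijection $h$ to a homeomorphism, and here the main obstacle is that $h^{-1}$ admits no handy continuous formula: passing from a channel back to Kraus operators requires diagonalizing the Choi matrix, an operation which is not continuous where eigenvalues collide. I would therefore not attempt to invert $h$ directly, but instead use a compactness trick. The complex Stiefel manifold $V_n(\CC^{nm^2})$ is closed and bounded in $\CC^{nm^2\times n}$, hence compact, so its continuous image $V_n(\CC^{nm^2})/\mathrm{U}(nm)=\pi(V_n(\CC^{nm^2}))$ is compact; meanwhile $\mathcal{C}han(n, m)$, sitting inside the Euclidean space $\mathcal{M}_{nm}$, is Hausdorff. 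A continuous bijection from a compact space onto a Hausdorff space is automatically a homeomorphism: every closed subset of the domain is compact, so its $h$-image is compact and therefore closed in the Hausdorff codomain, whence $h$ is a closed map. This also makes the stated triangle commute in the category of topological spaces.

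For the openness of $\pi$ I would appeal to the standard fact that the orbit map of a group acting by homeomorphisms is open: for open $U\subseteq V_n(\CC^{nm^2})$ the saturated set $\pi^{-1}(\pi(U))=\bigcup_{W\in\mathrm{U}(nm)}W\cdot U$ is a union of homeomorphic copies of $U$, since each element of $\mathrm{U}(nm)$ acts on the Stiefel manifold by the linear, hence bicontinuous, map~\eqref{unitary-stiefel}; therefore $\pi^{-1}(\pi(U))$ is open and $\pi(U)$ is open in the quotient topology, i.e.\ $\pi$ is open. Finally $\widetilde{\pi}=h\circ\pi$ is a composition of an open map with a homeomorphism, hence open, which completes the plan. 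I expect the only points demanding care to be the well-definedness and injectivity packaged into the bijection $h$ and the fact that $\widetilde{\pi}$ maps into $\mathcal{C}han(n, m)$ — but these are exactly the content of the non-uniqueness theorem for Kraus operators and of Kraus' theorem recorded above, so no new difficulty arises.
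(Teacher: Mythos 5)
Your proposal is correct, and it diverges from the paper's proof at the one step that actually carries weight: the passage from ``$h$ is a continuous bijection'' to ``$h$ is a homeomorphism.'' The paper proves continuity of $h^{-1}$ directly and constructively: given $\Phi_k\to\Phi$ it takes the canonical Kraus representation built from $\sqrt{C_{\Phi_k}}$ (via Proposition~\ref{kraus-as-root}), invokes the operator inequality $\|\sqrt{A}-\sqrt{B}\|\leq\sqrt{\|A-B\|}$ to get $\sqrt{C_{\Phi_k}}\to\sqrt{C_{\Phi}}$, and hence produces representing points $s_k\to s$ on the Stiefel manifold with $\pi(s_k)=h^{-1}(\Phi_k)$. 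You instead observe that $V_n(\CC^{nm^2})$ is compact, so the quotient is compact, that $\mathcal{C}han(n,m)\subset\mathcal{L}(\mathcal{M}_n,\mathcal{M}_m)$ is Hausdorff, and that a continuous bijection from a compact space onto a Hausdorff space is automatically closed, hence a homeomorphism. Both arguments are valid; note in particular that your compactness argument only needs the codomain to be Hausdorff, so there is no circularity about the Hausdorffness of the quotient. What your route buys is brevity and the avoidance of any spectral-theoretic input (no square roots, no continuity of $A\mapsto\sqrt{A}$); you correctly identify that inverting $h$ by diagonalization would be problematic, and you sidestep it entirely. What the paper's route buys is an explicit continuous choice of Kraus representation $\Phi\mapsto\sqrt{C_\Phi}$, i.e.\ a quantitative statement with a modulus of continuity, which survives in settings where the compactness trick is unavailable. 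Your continuity argument for $\widetilde{\pi}$ (entries of $C_{\widetilde{\pi}(K)}$ are polynomial in the real and imaginary parts of the entries of $K$, and $\Phi\mapsto C_\Phi$ is a linear isomorphism) is also cleaner than the paper's sequential estimate, and your proof of openness of $\pi$ spells out the standard fact the paper cites. No gaps.
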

The proof of the theorem is provided in Appendix~\ref{B}.

As $V_n(\CC^{nm^2})/\mathrm{U}(nm)$ and $\mathcal{C}han(n, m)$ are homeomorphic by the natural homeomorphism~$h$, they can be identified, so we will write $\pi\colon V_n(\CC^{nm^2})/\mathrm{U}(nm)\to \mathcal{C}han(n, m)$.
Now we can describe the orbits of the action of the group $\mathrm{U}(nm)$ on $V_n(\CC^{nm^2})$.

\begin{theorem}
\label{Stiefel_repr}
Let $\Phi$ be a quantum channel with the Kraus rank equal to $k$. Then $\pi^{-1}(\Phi)$ is an embedded smooth submanifold of $V_n(\CC^{nm^2})$ which is diffeomorphic to the Stiefel manifold $V_{k}(\CC^{nm})$. In particular, $\pi^{-1}(\Phi)$ is diffeomorphic to $\mathrm{U}(nm)$ if and only if $k=nm$.
\end{theorem}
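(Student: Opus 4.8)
The plan is to realize the fiber $\pi^{-1}(\Phi)$ explicitly as the image of $V_k(\CC^{nm})$ under the restriction of an injective linear map, and then read off both the submanifold structure and the diffeomorphism from standard facts about linear embeddings. First I would fix a \emph{minimal} Kraus representation of $\Phi$, i.e.\ operators $L_1,\dots,L_k\colon\CC^n\to\CC^m$ with $\Phi(A)=\sum_{l=1}^k L_lAL_l^{\dagger}$; by the standard characterization of the Kraus rank, minimality is equivalent to linear independence of $L_1,\dots,L_k$ in $\CC^{m\times n}$ (and $k\le nm$, so $V_k(\CC^{nm})$ is well defined). Padding this family with $nm-k$ zero operators $\widetilde L_{k+1}=\dots=\widetilde L_{nm}=0$ and applying the theorem on unitary freedom of Kraus representations stated above, every point $(K_1,\dots,K_{nm})$ of $\pi^{-1}(\Phi)$ has the form $K_i=\sum_{j=1}^{nm}u_{ij}\widetilde L_j=\sum_{j=1}^{k}u_{ij}L_j$ for some $U=(u_{ij})\in\mathrm{U}(nm)$. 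Thus $K_i$ depends only on the first $k$ columns of $U$, which assemble into a matrix $W=(w_{ij})\in\CC^{nm\times k}$ with $W^{\dagger}W=\mathbb{I}_k$, i.e.\ a point of $V_k(\CC^{nm})$; conversely every such $W$ extends to a unitary. Hence $\pi^{-1}(\Phi)=\{\psi(W):W\in V_k(\CC^{nm})\}$, where $\psi(W)$ is the block column whose $i$-th block is $\sum_{j=1}^k w_{ij}L_j$. A short computation using $W^{\dagger}W=\mathbb{I}_k$ and $\sum_l L_l^{\dagger}L_l=\mathbb{I}_n$ shows $\psi(W)^{\dagger}\psi(W)=\mathbb{I}_n$, so $\psi$ maps into $V_n(\CC^{nm^2})$, and a similar computation shows $\psi(W)$ is a Kraus representation of $\Phi$, so $\psi$ is a bijection onto $\pi^{-1}(\Phi)$.

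Next I would observe that $\psi$ is the restriction to $V_k(\CC^{nm})$ of the \emph{linear} map $\Psi\colon\CC^{nm\times k}\to\CC^{nm^2\times n}$ sending $W$ to the block column with blocks $\sum_{j=1}^k w_{ij}L_j$, and that $\Psi$ is \emph{injective}: $\Psi(W)=0$ forces $\sum_j w_{ij}L_j=0$ for every $i$, hence $w_{ij}=0$ for all $i,j$ by linear independence of the $L_j$. An injective linear map between finite-dimensional spaces is a smooth embedding onto the linear subspace $\Psi(\CC^{nm\times k})$, and restricting a smooth embedding to the embedded submanifold $V_k(\CC^{nm})\subset\CC^{nm\times k}$ again gives a smooth embedding $\psi\colon V_k(\CC^{nm})\to\CC^{nm^2\times n}$, whose image is exactly $\pi^{-1}(\Phi)$. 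Since this image lies in the embedded submanifold $V_n(\CC^{nm^2})\subset\CC^{nm^2\times n}$, which carries the subspace topology, $\psi$ is also a smooth embedding into $V_n(\CC^{nm^2})$: it is smooth as a map into $V_n(\CC^{nm^2})$ because the inclusion $V_n(\CC^{nm^2})\hookrightarrow\CC^{nm^2\times n}$ is an embedding; it is injective; its differential is injective at each point with image contained in the corresponding tangent space of $V_n(\CC^{nm^2})$; and it is a homeomorphism onto its image in the subspace topology. Therefore $\pi^{-1}(\Phi)$ is an embedded smooth submanifold of $V_n(\CC^{nm^2})$ and $\psi$ is a diffeomorphism $V_k(\CC^{nm})\xrightarrow{\ \cong\ }\pi^{-1}(\Phi)$.

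For the last assertion I would use the identification $V_{nm}(\CC^{nm})=\mathrm{U}(nm)$, which gives $\pi^{-1}(\Phi)\cong\mathrm{U}(nm)$ whenever $k=nm$. For the converse I would compare real dimensions: from $V_k(\CC^{nm})\cong\mathrm{U}(nm)/\mathrm{U}(nm-k)$ one gets $\dim_{\RR}V_k(\CC^{nm})=(nm)^2-(nm-k)^2=2nmk-k^2$, while $\dim_{\RR}\mathrm{U}(nm)=(nm)^2$, so a diffeomorphism $\pi^{-1}(\Phi)\cong\mathrm{U}(nm)$ forces $2nmk-k^2=(nm)^2$, i.e.\ $(nm-k)^2=0$, hence $k=nm$. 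The step I expect to require the most care is the passage from the set-theoretic bijection $\psi$ to the claim that $\pi^{-1}(\Phi)$ is an \emph{embedded} submanifold and $\psi$ a diffeomorphism onto it; packaging $\psi$ as the restriction of an injective linear map is what makes this clean and avoids ad hoc checks of constant-rank or properness conditions, while the algebraic input — linear independence in a minimal Kraus representation together with the unitary-freedom theorem — is standard.
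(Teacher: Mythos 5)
Your proposal is correct, and it reaches the same parametrization of the fiber as the paper does, but by a somewhat different route and with a cleaner justification of the embedding claim. The paper works on the Choi-matrix side: it fixes the spectral decomposition $C_{\Phi}=\sum_{i=1}^k v_iv_i^{\dagger}$, shows by a direct positivity argument that the vectors $w_l$ of any other decomposition $C_{\Phi}=\sum_l w_lw_l^{\dagger}$ must lie in $\operatorname{supp}(C_{\Phi})$, and extracts the coefficient matrix $M=(\mu_{li})$ with $M^{\dagger}M=\mathbb{I}_k$; your matrix $W$ is exactly this $M$ read through the vectorization dictionary between Choi vectors and Kraus operators, and you obtain surjectivity instead from the unitary-freedom theorem (which the paper states but does not use here). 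The more substantive difference is how the embedded-submanifold property is secured: the paper invokes the general theorem that orbits of a smooth action of a compact Lie group are embedded submanifolds, and then argues the parametrization by $M$ is smooth in both directions; you bypass the group-orbit theorem entirely by packaging the parametrization as the restriction of an injective \emph{linear} map $\Psi$ to the compact embedded submanifold $V_k(\CC^{nm})$, which yields the embedding and the diffeomorphism in one stroke (and, since $V_k(\CC^{nm})$ is compact, the injective immersion is automatically an embedding). You also supply the ``only if'' half of the final assertion by a dimension count, which the paper leaves implicit. Both arguments are sound; the paper's buys the submanifold statement from general Lie theory and keeps the explicit map as a parametrization of an already-known manifold, while yours is more self-contained and makes the diffeomorphism and the embedding come from the same elementary linear-algebra fact.
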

The proof of the theorem is provided in Appendix~\ref{C}. 

While bijection between the sets follows directly from isometry between the minimal and arbitrary Kraus OSRs, the key point of this theorem is the statement about the diffeomorphism of the manifolds.

\section{Stiefel induced metric on quantum channels}
\label{sec:metric}

In this section, we define the metric on the set of quantum channels using metric on the complex Stiefel manifold. 

The Riemannian metric on the Stiefel manifold $V_n(\CC^{nm^2})$ defines the structure of a metric space and the distance $d(\cdot,\cdot)$ between points of this manifold. This metric induces a metric on $V_n(\CC^{nm^2})/\mathrm{U}(nm)$ and, hence, a metric $D(\cdot,\cdot)$ on the set of quantum channels $\mathcal{C}han(n, m)$. The distance between any two channels $\Phi_1,\Phi_2\in\mathcal{C}han(n, m)$ is defined by the formula
\begin{equation}
D(\Phi_1,\Phi_2)=\inf\{d(s_1,s_2)\colon s_1\in \pi^{-1}(\Phi_1),s_2\in \pi^{-1}(\Phi_1)\}.
\end{equation}
It would be interesting to compare this metric with other known distances on quantum channels, such as the diamond norm metric~\cite{Aharonov_Kitaev_Nisan_1998,watrous2018theory} and the Bures distance~\cite{Holevo_Book_2019, Wilde_Book_2017}. Note that the last metric determines the minimum possible distance induced by the operator norm between the Stinespring isometries of channels. The continuity of Stinespring isometries from a quantum channel was proved in finite-dimensional and infinite-dimensional cases~\cite{KRETSCHMANNa,KRETSCHMANN}. In these works, it was proved that for quantum channels which are sufficiently close with respect to the diamond norm metric, there are corresponding Stinespring isometries  which have the same  environment and are sufficiently close with respect to the operator norm. Such continuity was further studied in \cite{Shirokov2020, Shirokov2022, vomEnde2023, vomEnde2023a} both in finite- and infinite-dimensional cases. The geometry of the complex Stiefel manifold was not considered in these works that might be an interesting question for a further research.

\section{Optimization over quantum channels}
\label{Optimization}
In this section, we consider various optimization problems over quantum channels which appear in quantum control.

In quantum control, one applies to the controlled quantum system an external action (e.g., a shaped laser field or a tailored environment) $f(t)$ which belongs to some set of admissible controls, with the goal to optimize some quantity of interest, which is called objective or cost functional. If the controlled system is isolated from any environment, it is a closed quantum system whose dynamics is described by the Schr\"odinger equation for a unitary evolution operator $U_t^f$ with a control-dependent Hamiltonian $H(f)$,
\[
\frac{dU_t^f}{dt}=-iH(f)U_t^f,\quad U_0^f=\mathbb I.
\]
The dynamics of an open controlled quantum system is described by the master equation for the system density matrix $\rho_t^f$ with some control-dependent generally non-Hermitian generator ${\mathcal L}_f$, in particular in the Markovian case by the Gorini---Kossakowski---Sudarshan---Lindblad (GKSL) master equation,
\begin{equation}\label{Eq:ME}
\frac{d\rho_t^f}{dt}={\mathcal L}_f(\rho_t^f),\quad \rho_0^f=\rho_0.
\end{equation}
Non-Markovian master equations can be used as well. For an open quantum system, the evolution of the system density matrix from $t=0$ to $t$ is described by a quantum channel, $\rho_0\to\rho_t^f=\Phi_t^f(\rho_0)$, where $\Phi_t^f\in Chan(n)$ and $n$ is the dimension of the system. In this way, quantum master equation leads to an evolution in the space of quantum channels, and hence via Kraus OSR it induces a trajectory on the corresponding complex Stiefel manifold. In opposite, Kraus OSR in some cases can be used to deduce master equation, either Markovian or not~\cite{Lidar_Bihary_Whaley_2001}.

In Mayer-type optimization problems, the objective functional depends on quantum state or state transformation (which can be unitary evolution operator for closed systems or quantum channel for open systems) of the system at some final time $T$. Some quantum control problems involve objectives which depend on the channel itself, so they are described by
\[
{\mathcal F}(f)=F(\Phi_T^f)\to\max \textrm{ or } \min.
\]
Here $F$ is a function on the set of quantum channels which is called kinematic objective functional, while $\mathcal F$ is called dynamic objective functional. The controlled master equation~(\ref{Eq:ME}) defines the map $\chi: f\to \Phi_T^f$ from the set of controls to the set of quantum channels. Then the objective functional $\mathcal F$ is a composition of the maps $F$ and $\chi$, $\mathcal F=F\circ \chi$. Quantum channels, which can be obtained using coherently Markovian controlled evolution, are known to form a Lie semigroup~\cite{DIRR2009, Schulte-Herbruggen2017}.

Below we provide some examples of kinematic objective functionals $F\colon \mathcal{C}han(n,m)\to \mathbb{R}$. Master equation~(\ref{Eq:ME}) induces channels with the same dimension of input and output spaces, i.e. channels in $Chan(n)$. However, channels with different dimensions also appear in applications, and below we consider this more general case. All the following functionals are either convex or concave.

\begin{enumerate}
    \item Expectation of a quantum system observable 
    \begin{equation}
    \label{open_obj_func}
        F_O(\Phi)=\Tr\left(O \Phi(\rho_0)\right),\quad \Phi\in\mathcal{C}han(n,m).
    \end{equation}
    Here $O$ is a quantum observable (a Hermitian $m\times m$ matrix)  and $\rho_0$  is the initial system density matrix ($\rho_0\in \CC^{n\times n}$, $\rho_0\geq 0$ and $\mathrm{Tr}\rho_0=1$). This kinematic objective functional is an affine function on $\mathcal{C}han(n,m)$, hence it is both convex and concave.

\item Thermodynamic and informational characteristics. 

Let $S$ be some quantum mechanical entropy (von Neumann, quantum conditional entropy, quantum Tsallis, Renyi, Renner’s (conditional) min- and max-entropies, Petz's quasi-entropies, etc.)~\cite{Ohya_Petz_Book_1993, Ohya_Watanabe_Book_2010}. We consider those which are concave functions on the set of density matrices. For example, von Neumann entropy has the form
\[
S(\rho)=-\Tr(\rho\log\rho)=-\sum_{\lambda_i\neq 0}\lambda_i\log{\lambda_i},
\]
where $\{\lambda_i\}$ are eigenvalues of the density matrix $\rho$.
The kinematic objective functional has the form~\cite{PechenRabitz2010}
\begin{equation}
\label{Functional_with_entropy}
F_{O}^{S}(\Phi)=-\Tr (O\Phi(\rho_0)) +\frac 1\beta S(\Phi(\rho_0)), \quad \Phi\in\mathcal{C}han(n,m),
\end{equation}
where $\beta>0$. It represents an important physical class of objectives that include thermodynamic quantities such as free energy, which corresponds to $O=H$, where $H$ is the Hamiltonian of the system. Other entropy involving quantities can also be considered~\cite{MorzhinPechen2024}. The entropy function of a quantum channel can be defined by the entropy of its Choi matrix~\cite{Chu_Huang_Li_Zheng_2022}. From the continuity of the von Neumann entropy on the set of all density matrices (see e.g.~\cite{Nielsen_Chuang_2010, Shirokov2010}), the continuity of the functional $F_{O}^{S}$ follows. This kinematic objective functional is a concave function on $\mathcal{C}han(n,m)$.

\item Quantum channel generation functional.

A problem  of generation of a quantum channel $\Phi_0\in \mathcal{C}han(n,m)$ can be formulated as a minimization problem of the kinematic objective functional
    \begin{equation}
    \label{open_chan_func}
        F_{\Phi_0}(\Phi)=\|\Phi-\Phi_0\|_F^2,\quad\text{where } \|\mathord{\cdot}\|_F \text{ is Frobenius norm,}\quad\Phi\in\mathcal{C}han(n,m).
    \end{equation}
We can as well consider other norms such as a diamond norm~\cite{Aharonov_Kitaev_Nisan_1998,watrous2018theory}. This functional (with any norm in the definition) is strictly convex. Examples of quantum channels with $n\ne m$ are given above.

A special case of this functional includes quantum gate generation functionals. A quantum gate is the unitary operator $W\in U(n)$ defined up to a physically irrelevant phase.  Suppose $\Phi_W\in\mathcal{C}han(n)$ is a unitary quantum channel corresponding to the quantum gate $W$, so it acts on any density matrix $\rho$ as $\Phi_W(\rho)=W\rho W^{\dagger}$. Then the quantum gate generation functional  has the form
\begin{equation}
\label{open_gate_func}
    F_W(\Phi)=\|\Phi-\Phi_W\|_F^2,\quad \Phi\in\mathcal{C}han(n).
\end{equation}

\item Quantum gate generation functional with three density matrices.
    
Let $\rho_1=\sum_{k=1}^n \lambda_k|\phi_k\rangle \langle \phi_k|$, where $\{|\phi_k\rangle\}_{k=1}^n$ is an orthogonal basis in $\mathbb{C}^n$,  be a~density matrix with $n$ different nonzero eigenvalues. Let $\rho_2$ be a~one dimensional orthogonal projector on $\mathbb{C}^n$ such that $\rho_2|\phi_k\rangle\neq 0$ for $k= 1,\ldots,n$. Let $\rho_3=\frac{1}{n}\mathbb{I}_n$. Then another functional for the problem of generating a quantum gate is:
\begin{equation}
\begin{gathered}
\label{open_GRK_func}
F^{\mathrm{GRK}}_W(\Phi) := \frac{1}{6}\sum_{i=1}^3 \|\Phi(\rho_i)-W\rho_i W^{\dagger}\|_F^2,\; \Phi\in\mathcal{C}han(n).
\end{gathered}
\end{equation}
The functional $F_W^{\mathrm{GRK}}$ is equal to 0 if and only if the three equalities $\Phi\rho_i=U\rho_iU^\dagger$ ($i = 1,2,3$) are satisfied.  As proved by Goerz, Reich, and Koch in~\cite{goerz2014, Goerz_2021}, if these three equalities are satisfied for a quantum channel $\Phi$  then $\Phi(\rho)=U \rho U^\dagger$ for any density matrix $\rho$, i.e.\ $\Phi=\Phi_U$.  Thus, the minimum value of this functional is achieved on the required target unitary quantum channel. This functional is convex.
\end{enumerate}

Examples of finding infimum over quantum channels from $\mathcal{C}han(n,m)$ acting between spaces of different dimensions ($n\ne m$) include squashed information and conditional entanglement of mutual information~\cite{Hayden_Jozsa_Petz_Winter_2004,Yang_Horodecki_Wang_2008}.

\section{Control landscapes and absence of traps}
\label{Absence_of kinematic_traps}

In this section, we provide applications of the obtained results to the analysis of kinematic quantum control landscapes. We consider the notion of the kinematic quantum control landscape for controlled open quantum systems defined  on quantum channels and on complex Stiefel manifolds. We show that the quotient map transforms local extremum points of a cost functional on the Stiefel manifold to local extremum points of the corresponding cost functional on quantum channels, and then apply this result to solve a non-trivial problem of proving the absence of traps (local optima that are not global) for objective functionals defined on the complex Stiefel manifolds. This absence of traps is proved here for a broad class of quantum objective cost functionals defined on the Stiefel manifolds, including the mean value of a quantum observable, various functionals describing the generation of a quantum channel, and, more generally, any convex objective functionals for an $n$-level open quantum systems of an arbitrary dimension $n$.

In practice, local search algorithms are often applied to find optimal solutions for quantum control problems. The efficiency of their work strongly depends on the presence or absence of  traps~\cite{Rabitz2004, Fouquieres2013}. Therefore, the analysis of the presence or absence of traps is an important task for general quantum control problems. To fully carry out this analysis, it is necessary to study the dynamic quantum control landscape, that is, to analyze the dependence of the dynamic objective functional ${\mathcal F}(f)$ on the control $f$. The dynamic functional is represented as a composition of the kinematic functional $F(\Phi)$ defined on the set of possible evolutions of the system, and the actual mapping $\chi$ from dynamic controls to the evolution operators~\cite{Rabitz2004, Pechen2011}. Therefore, the properties of the dynamic landscape depend on the properties of the kinematic landscape, that is, on the graph of the kinematic functional.

Let us now give some precise definitions.
\begin{definition}
A function $F\colon\mathcal{C}han(n,m)\to\RR$ is a kinematic objective functional on quantum channels. The kinematic landscape is the graph of the kinematic objective functional $F$. A kinematic trap for the problem of maximizing (minimizing) the objective functional is a point of local but not global maximum (minimum) of the kinematic objective functional $F$.
\end{definition}

So, trap is a suboptimal maximum (for maximization) or minimum (for minimization) of the objective functional. 

Note that each continuous functional $F$ on the set $\mathcal{C}han(n, m)$ can be associated with the continuous functional $F_{Choi}$ on the set $\mathcal{C}hoi(n, m)$ according to the relation $F_{Choi}(C_{\Phi}) = F(\Phi)$. By Proposition \ref{choi-chan-iso}, extrema points of $F_{Choi}$ and $F$ are in one-to-one correspondence: if $\Phi_0$ is the extremum point of $F$, then $C_{\Phi_0}$ is the extremum point of $F_{Choi}$ of the same type. In fact, functionals $F_{Choi}$ and $F$ can be identified.
From compactness of $\mathcal{C}hoi(n, m)$  it follows that any continuous kinematic functional reaches its global extrema. Along with the optimization problem on a set of Choi matrices (or quantum channels), one can consider the problem of maximizing or minimizing a functional on Stiefel manifolds. The corresponding theory was developed in~\cite{PechenJPA2008, OzaJPA2009}. Objective functionals on $V_n(\CC^{nm^2})$ must be constant on the orbits of action of the unitary group $U(nm)$.
\begin{definition}
Any kinematic objective functional $F\colon\mathcal{C}han(n,m)\to\RR$
generates the cost function on the complex Stiefel manifold  $V_n(\CC^{nm^2})$ as $F\circ \pi$, which we denote  by $\widetilde{F}$. A kinematic trap for the problem of maximizing (minimizing) the objective functional $\widetilde{F}$ is a point of local but not global maximum (minimum) of the kinematic objective functional~$\widetilde{F}$.
\end{definition}

As a corollary of Theorem~\ref{homeomorhic}, we have the following connection between extrema points of control landscapes on the topological space of quantum channels $\mathcal{C}han(n, m)$ and on the complex Stiefel manifold $V_{n}(\CC^{nm^2})$.

\begin{theorem}
\label{extrema-match}
Let $F\colon \mathcal{C}han(n, m)\to \RR$ be a continuous function on the space of quantum channels and  $\widetilde{F}=F\circ\pi$ be the corresponding function on the complex Stiefel manifold  $V_n(\CC^{nm^2})$. Then $s$ is a point of a non-strict local extremum  of the function $\widetilde{F}$ on the $V_n(\CC^{nm^2})$ if and only if $\pi(s)$ is a point of non-strict local extremum of the function $F$ on the space of quantum channels $\mathcal{C}han(n, m)$.
\end{theorem}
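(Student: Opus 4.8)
The plan is to exploit two facts established above: that $\pi\colon V_n(\CC^{nm^2}) \to \mathcal{C}han(n,m)$ is continuous, surjective, and \emph{open} (Theorem~\ref{homeomorhic}), and that $\widetilde{F} = F \circ \pi$. The key structural observation is that $\pi$ sends open neighborhoods to open neighborhoods and, being surjective, every point of $\mathcal{C}han(n,m)$ has a preimage; this lets me transfer local-extremum conditions in both directions. I would phrase "non-strict local extremum" precisely: $s$ is a non-strict local maximum of $\widetilde F$ if there is an open neighborhood $W \ni s$ in $V_n(\CC^{nm^2})$ with $\widetilde F(s') \le \widetilde F(s)$ for all $s' \in W$ (and symmetrically for minimum); similarly for $F$ on $\mathcal{C}han(n,m)$.

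First, the ''only if'' direction. Suppose $s$ is a non-strict local maximum of $\widetilde F$, witnessed by an open neighborhood $W \ni s$. Since $\pi$ is open, $\pi(W)$ is an open neighborhood of $\pi(s)$ in $\mathcal{C}han(n,m)$. I claim $F(\Phi') \le F(\pi(s))$ for all $\Phi' \in \pi(W)$: indeed any such $\Phi'$ equals $\pi(s')$ for some $s' \in W$, so $F(\Phi') = \widetilde F(s') \le \widetilde F(s) = F(\pi(s))$. Hence $\pi(s)$ is a non-strict local maximum of $F$. The minimum case is identical.

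Second, the ''if'' direction. Suppose $\pi(s)$ is a non-strict local maximum of $F$, witnessed by an open neighborhood $V \ni \pi(s)$ in $\mathcal{C}han(n,m)$. Since $\pi$ is continuous, $W := \pi^{-1}(V)$ is open in $V_n(\CC^{nm^2})$ and contains $s$. For any $s' \in W$ we have $\pi(s') \in V$, so $\widetilde F(s') = F(\pi(s')) \le F(\pi(s)) = \widetilde F(s)$. Thus $s$ is a non-strict local maximum of $\widetilde F$; again the minimum case is symmetric. Combining the two directions over both extremum types proves the equivalence.

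I do not anticipate a serious obstacle here: the only subtlety is that the ''only if'' direction genuinely \emph{needs} openness of $\pi$ (continuity alone would not push a neighborhood forward to a neighborhood), so the proof must invoke that part of Theorem~\ref{homeomorhic} rather than just continuity. One should also note that the statement deliberately concerns \emph{non-strict} extrema, since $\widetilde F$ is constant along the fibers $\pi^{-1}(\Phi)$ (which are positive-dimensional manifolds by Theorem~\ref{Stiefel_repr} whenever the Kraus rank is at least one), so strict extrema of $\widetilde F$ never occur and only the non-strict notion transfers correctly. It is worth remarking explicitly in the proof that this is why the theorem is the right bridge for the subsequent trap analysis: global extrema of $\widetilde F$ and of $F$ correspond trivially (surjectivity of $\pi$), so absence of traps for $\widetilde F$ is equivalent to absence of traps for $F$.
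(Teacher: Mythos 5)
Your proof is correct and follows essentially the same argument as the paper: the forward direction pushes the witnessing neighborhood forward via openness of $\pi$, and the converse pulls the neighborhood back via continuity, using $\widetilde{F}=F\circ\pi$ throughout. The additional remarks on why only non-strict extrema can occur for $\widetilde{F}$ (constancy on positive-dimensional fibers) are a sensible supplement but not needed for the proof itself.
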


\begin{proof}
In this proof, by extremum we mean a non-strict extremum. Let $s_0$ be a local extremum point of a functional $\widetilde{F}$ on the complex Stiefel manifold, i.e.\ $s_0$ is an extremum point in some open neighbourhood $\widetilde{U}\subset V_n(\CC^{nm^2})$. Since $\pi$ is an open map, $\pi(\widetilde{U})$ is an open neighborhood of $\pi(s_0)$ in $\mathcal{C}han(n, m)$. As $\widetilde{F}(s)=F(\pi(s))$ for all $s$, then $\pi(s_0)$ is an extremum point in $\pi(\widetilde{U})$. Conversely, let $\pi(s_0)$ be an extremum point in some open neighbourhood $U\subset \mathcal{C}han(n, m)$. Then $s_0$ is an extremum point in $\pi^{-1}(U)$, which is open as $\pi$ is continuous.
\end{proof}

For a controllable $n$-level closed quantum system, the kinematic objective functional is defined on the corresponding Lie group, the special unitary group $\mathrm{SU}(n)$. It is known that control landscapes for such functionals have no kinematic traps~\cite{Neumann1937, Brockett1989, Rabitz2004}.  

Theorem~\ref{extrema-match} allows us to reduce the study of local extrema points from the complex Stiefel manifold to the space of quantum channels.  For convenience relevant information from convex analysis is provided in Appendix \ref{A}. Using the fact that a convex function on a convex set does not have local but not global minima (Theorem~\ref{conv-extrema} from Appendix~\ref{A}), we obtain the following theorem.

\begin{theorem}
    Let $F\colon \mathcal{C}han(n,m)\to \RR$ be a continuous convex (concave) functional and $\widetilde{F}=F\circ\pi$ be a corresponding functional on the complex Stiefel manifold $V_n(\CC^{nm^2})$. Then $F$ has no local but not global minima (maxima) points on $\mathcal{C}han(n,m)$ and $\widetilde{F}$ has no local but not global minima (maxima) points on $V_n(\CC^{nm^2})$.
\end{theorem}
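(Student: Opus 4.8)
The plan is to combine Theorem~\ref{extrema-match} (which transports the extremum question from the Stiefel manifold to the space of quantum channels) with the classical fact from convex analysis that a convex function on a convex set has no strictly local minima. Since $\mathcal{C}han(n,m)$ is a convex set by Proposition~\ref{choi-chan-iso}, once we know $F$ has no local-but-not-global minimum on $\mathcal{C}han(n,m)$, Theorem~\ref{extrema-match} immediately yields the same statement for $\widetilde{F}=F\circ\pi$ on $V_n(\CC^{nm^2})$.

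First I would treat the convex case and the minimization problem. Suppose $\Phi_0\in\mathcal{C}han(n,m)$ is a point of local minimum of $F$, i.e.\ there is an open neighborhood $U$ of $\Phi_0$ in $\mathcal{C}han(n,m)$ with $F(\Phi_0)\le F(\Phi)$ for all $\Phi\in U$. Take any $\Phi_1\in\mathcal{C}han(n,m)$. By convexity of $\mathcal{C}han(n,m)$ the segment $\Phi_t=(1-t)\Phi_0+t\Phi_1$ lies in $\mathcal{C}han(n,m)$ for $t\in[0,1]$, and for small enough $t>0$ it lies in $U$; convexity of $F$ then gives $F(\Phi_0)\le F(\Phi_t)\le (1-t)F(\Phi_0)+tF(\Phi_1)$, hence $F(\Phi_0)\le F(\Phi_1)$. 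Since $\Phi_1$ was arbitrary, $\Phi_0$ is a global minimum. This is exactly the content of Theorem~\ref{conv-extrema} from Appendix~\ref{A}, which I would simply invoke; the argument above is merely a reminder of why it holds. Thus $F$ has no local-but-not-global minima on $\mathcal{C}han(n,m)$. Applying Theorem~\ref{extrema-match} with the roles of ``minimum'' fixed: if $\widetilde{F}$ had a local-but-not-global minimum at some $s\in V_n(\CC^{nm^2})$, then $\pi(s)$ would be a local minimum of $F$, hence (by the above) a global minimum of $F$, hence $\widetilde{F}(s)=F(\pi(s))$ would be the global minimum of $\widetilde{F}$ (as $\pi$ is surjective and $\widetilde{F}=F\circ\pi$) — a contradiction. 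So $\widetilde{F}$ has no local-but-not-global minima either.

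For the concave case one applies the convex case to $-F$: a concave $F$ means $-F$ is convex, a local maximum of $F$ is a local minimum of $-F$, and a global maximum of $F$ is a global minimum of $-F$, so the absence of local-but-not-global maxima of $F$ on $\mathcal{C}han(n,m)$ follows, and then Theorem~\ref{extrema-match} transports this to $\widetilde{F}$ on $V_n(\CC^{nm^2})$ exactly as before. One small point worth spelling out carefully is that ``local extremum'' in Theorem~\ref{extrema-match} is stated for non-strict extrema, which is precisely the notion relevant here, so the transport of maxima/minima through $\pi$ is legitimate without any strictness subtleties.

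There is no serious obstacle: the genuine work is already packaged into Theorem~\ref{extrema-match} (whose proof uses openness and continuity of $\pi$ from Theorem~\ref{homeomorhic}) and into the elementary convex-analysis lemma. The only thing to be careful about is bookkeeping — keeping straight which of ``convex/minimum'' versus ``concave/maximum'' one is in, and using surjectivity of $\pi$ to conclude that the global extremum of $F$ pulls back to the global extremum of $\widetilde{F}$ rather than merely a global extremum on a fiber. I would therefore present the proof as: (1) recall $\mathcal{C}han(n,m)$ is convex; (2) invoke Theorem~\ref{conv-extrema} to kill local-but-not-global minima of convex $F$ on $\mathcal{C}han(n,m)$; (3) apply Theorem~\ref{extrema-match} to lift the conclusion to $\widetilde{F}$; (4) dualize via $F\mapsto -F$ for the concave/maximum statement.
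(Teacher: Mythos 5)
Your proposal is correct and follows exactly the route the paper intends: combine the convexity of $\mathcal{C}han(n,m)$ with Theorem~\ref{conv-extrema} to rule out suboptimal minima of $F$, then transport the conclusion to $\widetilde{F}$ via Theorem~\ref{extrema-match}, dualizing for the concave case. The paper states this theorem as an immediate corollary of precisely these two results, so your write-up merely makes the same argument explicit.
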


We may apply this theorem to the functionals of interest described in Sec.~\ref{Optimization}. The functional $F_O$ is both convex and concave, so it has no local but not global maxima or minima on $\mathcal{C}han(n,m)$. Thus $\widetilde{F}_O$ has no kinematic traps on the Stiefel manifold $V_n(\CC^{nm^2})$. 

Similarly, the functional $F_O^S$ is concave and the functionals $F_{\Phi_0}$, $F_{W}$ and $F_W^{\mathrm{GRK}}$ are convex. Thus $\widetilde{F}_O^S$ has no points of local but not global maxima and  $\widetilde{F}_{\Phi_0}$, $\widetilde{F}_{W}$ and $\widetilde{F}_W^{\mathrm{GRK}}$ have no points of local but not global minima on the corresponding Stiefel manifold. As for local minima for $F_O^S$ and local maxima for $F_{\Phi_0}$, $F_{W}$ and $F_W^{\mathrm{GRK}}$, they can be attained only at the extreme points of $\mathcal{C}han(n, m)$. It is itself interesting and non-trivial to investigate the extreme points of the space of quantum channels~\cite{FRIEDLAND2016553}, and the existence of local but non-global extrema in this case requires further analysis.

We now look at the set of points on the Stiefel manifold at which the extremum of the functionals is attained. Global minimum of $F_{\Phi_0}$, $F_{W}$ and $F_W^{\mathrm{GRK}}$ is attained at a quantum channel $\Phi_0$ or $\Phi_W$ respectively. Therefore the subset of the Stiefel manifold at which the minimum is attained is an orbit $\pi^{-1}(\Phi_0)$ or $\pi^{-1}(\Phi_W)$ respectively. According to  Theorem~\ref{Stiefel_repr}, this orbit is an embedded smooth submanifold which is diffeomorphic to $V_k(\CC^{nm})$ for $F_{\Phi_0}$ or $V_k(\CC^{n^2})$ for $F_{W}$, where $k$ is the Kraus rank of this channel. As the quantum channel $\Phi_W$ has the Kraus rank equal to $1$,  the minimum subset for $\widetilde{F}_W$ and $\widetilde{F}_W^{\mathrm{GRK}}$ is an embedded smooth submanifold diffeomorphic to $V_1(\CC^{n^2})\cong S^{2n^2-1}$.

The sets of points at which the extrema of functionals $F_O$ and $F_O^S$ are attained, as well as maxima sets for functionals $F_{\Phi_0}$, $F_{W}$ and $F_W^{\mathrm{GRK}}$ are more complicated to analyze and require further consideration.

In the kinematic quantum control landscape analysis, it is commonly assumed that any kinematic map (i.e.\ any unitary evolution operator for a closed quantum system and any quantum channel for an open quantum system) can be generated using available physical controls via Schr\"odinger or master equation, either Markovian or non-Markovian. In some situations, not all density matrices can be generated using the physical controls in the dynamic master equations~\cite{Altafini2003, Lokutsievskiy_Pechen_2021}. Hence, not all quantum channels can be generated. In such situations, one has first to describe the set of reachable quantum channels, and then analyze the corresponding constrained quantum control landscape which is defined not over the whole set of quantum channels but over some subset which can be physically generated. For this purpose, methods of geometric control theory are usually used.

\section{Conclusions and Discussion}
\label{Conclusions}

The paper studies the relationship between two different parametrizations of quantum channels: using Choi matrices and using a complex Stiefel manifold. It is proved that the bijection between the quotient of the complex Stiefel manifold with respect to the action of the unitary group and the space of quantum channels is a homeomorphism. The orbit structure of the unitary group's action on the Stiefel manifold is described. Then we define, using the parametrization of quantum channels by the quotients of the complex Stiefel manifolds and the Riemannian metric on the Stiefel manifold, the corresponding metric and distance on the set of quantum channels. An interesting question is to compare this metric with the diamond norm metric and the Bures distance. As an application of the obtained geometric results to the study of kinematic landscapes for open quantum systems, it is proved that the surjection of the Stiefel manifold onto the space of quantum channels maps local extremum points of a continuous function to local extremum points. Using convex analysis methods, the absence of suboptimal traps on the Stiefel manifold for an $n$-level open quantum system is proved for the functionals describing the mean value of the quantum observable, thermodynamic quantities involving entropy, the functional for generating a quantum channel, and the functional for generating a quantum gate on three states. Control problems on homogeneous spaces are considered in the literature~\cite{Sachkov,Jurdjevic}. The complex Stiefel space is homogeneous; it would be interesting to study the lifting of controlled Markovian dynamics to this manifold.

\appendix
\section{Proof of Theorem~\ref{homeomorhic}}
\label{B}

First, we formulate some auxiliary facts. Choi's work~\cite{choi75} showed the following relationship between a completely positive linear mapping and its Choi matrix.

\begin{theorem}
\label{choi-decomposition-theorem}
    Let $\Phi\colon \mathcal{M}_n \to \mathcal{M}_m$ be a completely positive mapping and $C_{\Phi}$  be its Choi's matrix, which is written in the form
    \begin{equation}
    \label{choi-matrix-decomposition}
        C_{\Phi} = \sum_{l=1}^{nm} v_l v_l^{\dagger},
    \end{equation}
    where $\{v_l\}_{l=1}^{nm}$ is a set of vectors from $\CC^{nm}$ (some of these vectors may be zero).
    Using vectors $v_l$, we construct $m\times n$ matrices $K_l$  as follows:
    \begin{equation}
    \label{vector-matrix-matching}
    v_l = 
    \begin{pmatrix}
        v_{l,1} \\
        v_{l,2} \\
        \vdots \\
        v_{l,nm}
    \end{pmatrix}
    \quad \mapsto \quad
    K_l = 
    \begin{pmatrix}
        v_{l,1} & v_{l,m+1} & \dots & v_{l, nm-m+1} \\
        v_{l,2} & v_{l,m+2} & \dots & v_{l, nm-m+2}\\
        \vdots & \vdots & \ddots & \vdots \\
        v_{l,m} & v_{l,2m} & \dots & v_{l, nm}
    \end{pmatrix}
    \end{equation}
 Then the mapping $\Phi$ has the form
    \begin{equation}
    \label{Kraus-representation}
        \Phi(A)=\sum_{l=1}^{nm} K_l A K_l^{\dagger}.
    \end{equation}
Moreover,~\eqref{vector-matrix-matching} performs a bijection between Choi matrix decompositions of the form \eqref{choi-matrix-decomposition} and Kraus representations of $\Phi$ with exactly $nm$ Kraus operators.
\end{theorem}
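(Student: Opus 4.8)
The plan is to reduce the whole statement to a single bookkeeping identity relating the block structure of a rank-one matrix $vv^\dagger$ to a \emph{sandwiched} matrix unit $K E_{ij} K^\dagger$; once this identity is in hand, both the formula~\eqref{Kraus-representation} and the bijection claim are immediate.

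First I would note that the assignment~\eqref{vector-matrix-matching} is exactly the inverse of column-stacking vectorization: $v_l = \vect(K_l)$, so that if we split $v_l$ into $n$ consecutive blocks of length $m$, $v_l = (v_l^{(1)}; \dots; v_l^{(n)})$, then $v_l^{(i)}$ is precisely the $i$-th column $K_l e_i$ of $K_l$, where $e_1,\dots,e_n$ is the standard basis of $\CC^n$. In particular $v\mapsto K$ is a linear isomorphism $\CC^{nm}\to\CC^{m\times n}$, so ordered $nm$-tuples $(v_1,\dots,v_{nm})$ correspond bijectively to ordered $nm$-tuples $(K_1,\dots,K_{nm})$; it then remains only to match the two defining constraints under this correspondence.

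Next I would compute the $(i,j)$ block (of size $m\times m$) of $v_l v_l^\dagger$, which is $v_l^{(i)} (v_l^{(j)})^\dagger = (K_l e_i)(K_l e_j)^\dagger = K_l\, e_i e_j^\dagger\, K_l^\dagger = K_l E_{ij} K_l^\dagger$. Summing over $l$, the $(i,j)$ block of $\sum_{l=1}^{nm} v_l v_l^\dagger$ is $\sum_{l=1}^{nm} K_l E_{ij} K_l^\dagger$, while by the definition of the Choi matrix the $(i,j)$ block of $C_\Phi$ is $\Phi(E_{ij})$. Hence the identity $C_\Phi = \sum_l v_l v_l^\dagger$ holds if and only if $\Phi(E_{ij}) = \sum_l K_l E_{ij} K_l^\dagger$ for all $i,j$; and since $\{E_{ij}\}$ is a basis of $\mathcal{M}_n$ and both $\Phi$ and $A\mapsto\sum_l K_l A K_l^\dagger$ are linear, this is equivalent to $\Phi(A)=\sum_l K_l A K_l^\dagger$ for all $A\in\mathcal{M}_n$, i.e.\ to~\eqref{Kraus-representation}. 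Taking the hypothesis $C_\Phi=\sum_l v_l v_l^\dagger$ as given, this proves the displayed formula; reading the equivalence in both directions, together with the bijectivity of $v\mapsto K$, gives the claimed bijection between decompositions~\eqref{choi-matrix-decomposition} and Kraus representations of $\Phi$ with exactly $nm$ operators.

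I do not expect a genuine obstacle here: the one point needing care is the indexing in~\eqref{vector-matrix-matching}, namely checking that with the column-major convention used there the $i$-th block of $v_l$ really is the $i$-th column of $K_l$, so that the $(i,j)$ block of $v_l v_l^\dagger$ comes out as $K_l E_{ij} K_l^\dagger$ and not as a transpose or a conjugate. Everything past that point is linear bookkeeping. (Existence of at least one decomposition~\eqref{choi-matrix-decomposition} is not part of the statement, but would follow from $C_\Phi\ge 0$ --- Theorem~\ref{choi-theorem} --- by diagonalizing $C_\Phi$ and padding with zero vectors.)
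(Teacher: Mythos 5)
Your proof is correct. Note that the paper does not actually prove this theorem---it is quoted as an auxiliary fact from Choi's 1975 paper---so there is no in-paper argument to contrast with; your computation (identifying $v_l$ with the column-stacking $\vect(K_l)$, checking that the $(i,j)$ block of $v_l v_l^\dagger$ equals $K_l E_{ij} K_l^\dagger$, and matching this against the $(i,j)$ block $\Phi(E_{ij})$ of $C_\Phi$) is the standard proof, and it is exactly the computation the paper implicitly relies on immediately afterwards when it assembles the $v_l$ into a matrix $X$ and observes $XX^\dagger = C_\Phi$ to obtain Proposition~\ref{kraus-as-root}. Your indexing check is right, and the bijection claim does follow as you say from the fact that $v\mapsto K$ is a linear isomorphism under which the two defining constraints correspond.
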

Let us now consider the $nm\times nm$ matrix $$X=
\begin{pmatrix}
    v_{1, 1} & v_{2, 1} & \dots & v_{nm, 1}\\
    v_{1, 2} & v_{2, 2} & \dots & v_{nm, 2}\\
    \vdots & \vdots & \ddots & \vdots\\
    v_{1, nm} & v_{2, nm} & \dots & v_{nm, nm}
\end{pmatrix}.
$$
We have
\begin{equation}    
C_{\Phi}=\sum_{i=1}^{nm}v_iv_i^{\dagger}=\sum_{i,j,k=1}^{nm}v_{i,j}\overline{v}_{i,k}E_{jk},
\end{equation}
or in other words $(C_{\Phi})_{jk}=\sum_{i=1}^{nm}v_{i,j}\overline{v}_{i,k}$.  This implies $X X^{\dagger}=C_{\Phi}$. Thus, the corollary of Theorem~\ref{choi-decomposition-theorem} is the following proposition. 

\begin{proposition}
\label{kraus-as-root}
Let $\Phi$ be a quantum channel. Then there is a one-to-one correspondence between the Kraus expansions defining this quantum channel
\begin{equation}
\Phi(\rho)=\sum_{l=1}^{nm}K_l\rho K_l^{\dagger},
\end{equation}
and solutions of the equation $X X^{\dagger}=C_{\Phi}$ with respect to $X\in\CC^{nm\times nm}$. 
\end{proposition}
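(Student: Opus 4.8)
The plan is to obtain the proposition as an immediate consequence of Theorem~\ref{choi-decomposition-theorem} together with the elementary observation, already recorded in the paragraph preceding the statement, that if the vectors $v_1,\dots,v_{nm}\in\CC^{nm}$ are taken to be the columns of an $nm\times nm$ matrix $X$, then $XX^{\dagger}=\sum_{l=1}^{nm}v_lv_l^{\dagger}$. So the proof just composes two bijections.

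First I would set up the two bijections. Theorem~\ref{choi-decomposition-theorem} furnishes, via the reshaping~\eqref{vector-matrix-matching}, a bijection between decompositions $C_{\Phi}=\sum_{l=1}^{nm}v_lv_l^{\dagger}$, regarded as ordered $nm$-tuples $(v_1,\dots,v_{nm})$ of vectors in $\CC^{nm}$ (some of which may vanish), and Kraus representations $\Phi(\rho)=\sum_{l=1}^{nm}K_l\rho K_l^{\dagger}$ with exactly $nm$ ordered Kraus operators. Separately, the assignment $X\mapsto(\text{columns of }X)$ is a bijection from $\CC^{nm\times nm}$ onto ordered $nm$-tuples of vectors in $\CC^{nm}$; and under it the relation $XX^{\dagger}=C_{\Phi}$ corresponds exactly to the decomposition relation $\sum_l v_lv_l^{\dagger}=C_{\Phi}$, because $(XX^{\dagger})_{jk}=\sum_i v_{i,j}\overline{v}_{i,k}=\bigl(\sum_i v_iv_i^{\dagger}\bigr)_{jk}$ when the $v_i$ are the columns of $X$. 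Composing the restriction of this column bijection to the solution set of $XX^{\dagger}=C_{\Phi}$ with the bijection of Theorem~\ref{choi-decomposition-theorem} yields the claimed one-to-one correspondence.

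The only point that is not purely formal --- and the place where I would be slightly careful --- is to confirm that every Kraus tuple produced from a solution $X$ actually defines the given channel $\Phi$, i.e.\ that the trace-preservation constraint~\eqref{tp-property} need not be imposed separately. This is automatic: Theorem~\ref{choi-decomposition-theorem} guarantees $\Phi(A)=\sum_l K_lAK_l^{\dagger}$ for the operators $K_l$ obtained from any decomposition of $C_{\Phi}$, and since $\Phi$ is a quantum channel it is trace-preserving, which is precisely equivalent to $\sum_l K_l^{\dagger}K_l=\mathbb{I}_n$. I do not anticipate a real obstacle; the remaining care is bookkeeping --- keeping all tuples \emph{ordered}, so that a permutation of the $K_l$ generally produces a different solution $X$, and consistently using the column-major reshaping~\eqref{vector-matrix-matching} when translating between the $v_l$ and the $K_l$.
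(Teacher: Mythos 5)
Your proposal is correct and follows essentially the same route as the paper: the paper likewise derives Proposition~\ref{kraus-as-root} as a corollary of Theorem~\ref{choi-decomposition-theorem} by assembling the vectors $v_l$ as the columns of $X$ and observing that $(XX^{\dagger})_{jk}=\sum_i v_{i,j}\overline{v}_{i,k}$, so $XX^{\dagger}=C_{\Phi}$ is exactly the decomposition condition. Your extra remarks on ordering and on trace preservation being automatic are sound but not needed beyond what the paper records.
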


If $A\geq 0$ is a positive semi-definite matrix, by $\sqrt{A}$ denote the non-negative square root of  $A$. Then $\sqrt{A}$ is a unique matrix $B$ that is positive semi-definite and such that $B^2=A$. Let $\|\mathord{\cdot}\|$ denote the operator norm. Note that if $A,B$ are positive semi-definite operators then (see e.g.~\cite{Bhatia})
\begin{equation}
\label{sqrt_inequality}
\|\sqrt{A}-\sqrt{B}\|\leq \sqrt{\|A-B\|}.
\end{equation}

Now we proceed to the actual proof of Theorem~\ref{homeomorhic}.
\begin{proof}
Since all the spaces under consideration satisfy the first axiom of countability, to prove the continuity of mappings it is sufficient to prove sequential continuity.
Let us show that the mapping $\widetilde{\pi}$ is continuous. Consider a sequence of points $\{s_k\}_{k=1}^{\infty}$ of the Stiefel manifold $V_n(\CC^{nm^2})$ which converges to the point $s\in V_n(\CC^{nm^2})$. Let $s_k=(K_1^k, \dots, K_{nm}^k)$, where $k$ is the superscript, and $s=(K_1, \dots, K_{nm})$. Let $\Phi_k$ be a sequence of quantum channels corresponding to $(K_1^k, \dots, K_{nm}^k)$ and $\Phi$ be a quantum channel corresponding to $(K_1, \dots, K_ {nm})$. Then, as unit matrices $E_{ij}$ form an orthonormal basis with respect to the Hilbert-Schmidt  inner product, we get
\begin{equation}
\label{diff-norm}
\|\Phi_k - \Phi\|_F^2 = \sum_{i,j=1}^{n} \|\Phi_k(E_{ij}) - \Phi(E_{ij})\|_F^2 = \sum_{i,j=1}^{n} \left\lVert \sum_{l=1}^{nm} (K_l^k E_{ij} K_l^{k\dagger} - K_l E_{ij} K_l^{\dagger})\right\rVert_F^2.
\end{equation}
Since $s_k\to s$ as $k\to\infty$, we have $K_l^k \to K_l$ as $k\to \infty$ for all $l\in\{1,\ldots,mn\}$. Hence,  
\begin{equation}
K_l^k E_{ij} K_l^{k\dagger} - K_l E_{ij} K_l^{\dagger} = (K_l^k - K_l)E_{ij} K_l^{k\dagger} + K_l E_{ij}(K_l^{k\dagger} - K_l^{\dagger}) \to 0 \quad \text{as } k \to \infty.
\end{equation}
Therefore, in \eqref{diff-norm} each term of the internal sum tends to 0. Hence $\|\Phi_k-\Phi\|_F^2 \to 0$ for $k\to \infty$. This means that the mapping $\widetilde{\pi}\colon V_n(\CC^{nm^2})\to \mathcal{C}han(n, m)$ is continuous.

So we have that $\widetilde{\pi} = h \circ \pi$ and $\widetilde{\pi}$ is continuous. Then, by the universal property of the quotient topology (Theorem \ref{universal-property-quotient}), the mapping $h\colon V_n(\CC^{nm^2})/\mathrm{U}(nm) \to \mathcal{C}han (n, m)$ is continuous. Let us prove that it is a homeomorphism; for this, it is enough to show that $h^{-1}$ is also continuous.
    
Consider a sequence of quantum channels $\{\Phi_k\}$ which converges to a quantum channel $\Phi$. Then the sequence of Choi matrices $\{C_{\Phi_k}\}$ converges to $C_{\Phi}$. Inequality~(\ref{sqrt_inequality}) implies that $\sqrt{C_{\Phi_k}} \to \sqrt{C_{\Phi}}$ as $k\to \infty$. Let $s_k=(K_1^k, \dots, K_{nm}^k)$ and $s=(K_1, \dots, K_{nm})$ be points of the Stiefel manifold corresponding in the sense of Proposition~\ref{kraus-as-root} to $\sqrt{C_{\Phi_k}}$ and $\sqrt{C_{\Phi}}$ respectively. The matrices $\sqrt{C_{\Phi_k}}$ and $s_k=(K_1^k, \dots, K_{nm}^k)$, as well as the matrices $\sqrt{C_{\Phi}}$ and $s =(K_1, \dots, K_{nm})$, have the same sets of elements. Then from the convergence of the sequence $\{\sqrt{C_{\Phi_k}}\}$ to $\sqrt{C_{\Phi}}$ it follows that $s_k \to s$ as $k\to \infty$. From the continuity of the mapping $\pi$ it follows that $h^{-1}(\Phi_k) = \pi(s_k) \to \pi(s) = h^{-1}(\Phi)$ as $k\to \infty$.  This means that $h^{-1}$ is continuous, that is, $h$ is a homeomorphism.

It is known that if $X$ is a topological space and a group $G$ acts on it by homeomorphisms, then the quotient map $\pi\colon X \to X/G$ is an open map (see e.g.~\cite{lee2003introduction}). It implies that the mappings $\pi\colon V_n(\CC^{nm^2})\to V_n(\CC^{nm^2})/\mathrm{U}(nm)$ and, hence, $\widetilde{\pi}\colon V_n(\CC^{nm^2})\to \mathcal{C}han(n, m)$ are open.   
\end{proof}

\section{Proof of Theorem~\ref{Stiefel_repr}}
\label{C}

It is known that if a compact Lie group $G$ acts smoothly on a smooth manifold $M$ then the orbit of any point is an embedded smooth submanifold of $M$ (see e.g.~\cite{lee2003introduction}). Since the Lie group $\mathrm{U}(nm)$ is compact and acts smoothly on $V_n(\CC^{nm^2})$, the orbits of its action are embedded smooth submanifolds.

Let $\Phi$ be a quantum channel with Kraus rank $k$, i.e.\ $\operatorname{rank} (C_{\Phi})=k$. Since $C_{\Phi}\geq 0$, we can fix the expansion $C_{\Phi}=\sum_{i=1}^k v_i v_i^{\dagger}$, where $\{v_i\}_{i=1}^k$ are orthonormal eigenvectors of $C_{\Phi}$. This decomposition by Theorem \ref{choi-decomposition-theorem} corresponds to some fixed Kraus representation with Kraus operators $(K_1, \dots, K_{k}, 0, \dots, 0)$ which represents some point $s$ of Stiefel manifold. If $G=\mathrm{U}(nm)$, the orbit is $Gs \cong G/\Stab s\cong \mathrm{U}(nm)/\mathrm{U}(nm-k)$, because any element of stabilizer must act identically on the first $k$ Kraus operators due to their linear independence. Then the orbit is diffeomorphic to a complex Stiefel manifold $V_k(\CC^{nm})$ (if $k=mn$, this manifold is the unitary group $U(mn)$). Below we give rigorous proof with an explicit diffeomorphism.

Since the vectors $\{v_i\}_{i=1}^k$ are linearly independent, the matrices $\{v_iv_j^{\dagger}\}_{i,j=1}^k$ are also linearly independent. Indeed, this set is a basis in the space of all linear operators on $\operatorname{supp}(C_{\Phi})=\Span(\{v_i\}_{i=1}^k)$. By Theorem \ref{choi-decomposition-theorem} every point in $\pi^{-1}(\Phi)$ corresponding to the Kraus representation with Kraus operators $(\widetilde{K}_1, \dots, \widetilde{K }_{nm})$, 
uniquely corresponds to a certain decomposition of the Choi matrix in the form
\begin{equation}
C_{\Phi} = \sum_{l=1}^{nm} w_l w_l^{\dagger},
\end{equation}
where some $w_l$ are possibly $0$. Since $\sum_{i=1}^k v_i v_i^{\dagger} = \sum_{l=1}^{nm} w_l w_l^{\dagger}$, vectors $w_l$ belong to  $\operatorname{supp}(C_{\Phi})$. Indeed, let some vector $w_q$ have a component perpendicular to $\operatorname{supp}(C_{\Phi})$, i.e.\ $w_q=w_{\parallel}+w_{\perp}$, where $w_{\parallel}\in \operatorname{supp}(C_{\Phi})$, $w_{\perp}\in \Ker(C_{\Phi})$, $w_{\perp}\neq 0$. 
Then, on the one hand,
\begin{equation}
    w_{\perp}^{\dagger}C_{\Phi}w_{\perp} = \sum_{i=1}^k w_{\perp}^{\dagger}v_i v_i^{\dagger}w_{\perp}=0,
    \end{equation}
while on another hand, 
\begin{equation}
    w_{\perp}^{\dagger}C_{\Phi}w_{\perp} =w_{\perp}^{\dagger} w_q w_q^{\dagger} w_{\perp} + \sum_{l\neq q} w_{\perp}^{\dagger} w_l w_l^{\dagger} w_{\perp} = \|w_{\perp}\|^4 + \sum_{l\neq q} w_{\perp}^{\dagger} w_l w_l^{\dagger} w_{\perp} \geq \|w_{\perp}\|^4 > 0.
\end{equation}
We get a contradiction. Thus, all vectors $w_l$ belong to $\operatorname{supp}(C_{\Phi})$.
    
Let $w_l=\sum_{i=1}^k \mu_{li} v_i$. This expansion is unique, and the dependence of the coefficients $\mu_{li}$ on the vectors $w_l$ is smooth, since $\mu_{li} = v_i^{\dagger}w_l= \langle v_i, w_l \rangle$. Then 
\begin{equation}
C_{\Phi} = \sum_{l=1}^{nm} \left(\sum_{i=1}^k \mu_{li} v_i\right) \left(\sum_{j=1}^k \mu_{lj} v_j\right)^{\dagger} = \sum_{l=1}^{nm}\sum_{i,j=1}^{k} \mu_{li}\overline{\mu_{lj}}v_i v_j^{\dagger}.
\end{equation}
Then from the linear independence of $\{v_iv_j^{\dagger}\}_{i,j=1}^k$ we obtain
\begin{equation}
\sum_{l=1}^{nm} \mu_{li}\overline{\mu_{lj}}=\delta_{ij}.
\end{equation}
Thus, the $nm \times k$ matrix $M = (\mu_{li})$ satisfies the condition $M^{\dagger}M=\mathbb{I}_k$. Conversely, if we have such a matrix $M$, we can construct the set of vectors $\{w_l\}$ and consequently the Kraus representation. Therefore, points of the orbit are smoothly parameterized by matrices $M$ of the size $nm\times k$ satisfying $M^{\dagger}M=\mathbb{I}_k$. Thus the orbit $\pi^{-1}(\Phi)$ is an embedded smooth submanifold diffeomorphic to $V_k(\CC^{nm})$.

\section{Convexity}
\label{A}
Let $X$ be a topological space and $f\colon X \to \mathbb{R}$. A point $x_0$ is a point of local minimum (local maximum)   if there exists an open neighbourhood $U$ of $x_0$, such that for all $x\in U$ holds  $f(x)\geq f(x_0)$ ($f(x)\leq f(x_0)$). If for all $x\in X$ holds $f(x)\geq f(x_0)$ ($f(x)\leq f(x_0)$) then $x_0$ is a point of global minimum (global maximum). If $x_0$ is a point of local but not global minimum (maximum) it is a point of suboptimal minimum (maximum).
 
Recall some definitions and statements of convex analysis \cite{roberts1973convex}. Let $V$  be a real vector space. A subset $K\subset V$ is called convex if for every $x_1, x_2 \in K$ and $\lambda\in [0, 1]$ the point $\lambda x_1 + (1-\lambda)x_2$ belongs to $K$. If $K_1$ and $K_2$ are two convex subsets of two vector spaces, a mapping $f\colon K_1 \to K_2$ is convex-linear if  
\begin{equation*}
f(\lambda x_1 + (1-\lambda)x_2) = \lambda f(x_1) + (1-\lambda) f(x_1)
\end{equation*}
for any points $x_1, x_2 \in K_1$ and any $\lambda \in [0, 1]$. A function $f: K \to \mathbb{R}$ defined on a convex set $K$ is called convex (concave) if 
$$
f(\lambda x_1 + (1-\lambda) x_2) \leq \,(\geq)\, \lambda f(x_1) + (1-\lambda) f(x_2)
$$ 
for any two points $x_1, x_2 \in K$ and  any $\lambda \in [0, 1]$. Similarly, $f: K \to \mathbb{R}$ is called strictly convex (strictly concave) if 
$$
f(\lambda x_1 + (1-\lambda) x_2) < \,(>)\, \lambda f(x_1) + (1-\lambda) f(x_2)
$$
for any two distinct points $x_1, x_2 \in K$ and $0 < \lambda < 1$. A point $x$ of a convex set $K$ is called an extreme point if there are no such distinct $x_1, x_2 \in K$ and $\lambda\in (0, 1)$ that $x=\lambda x_1 + (1-\lambda) x_2$.
    
\begin{theorem}
\label{conv-extrema}
Let $f\colon K \to \RR$ be a convex (concave) function defined on the convex subset $K$ of a real normed vector space. Then any local minimum (maximum) of the function $f$ is a global minimum (maximum).
If $f$ is a strictly convex (concave) function on $K$, then the point where the global minimum (maximum) of $f$ is attained is unique if exists. As well, any local maximum (minimum) point of $f$ is an extreme point of~$K$.
\end{theorem}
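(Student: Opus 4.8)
The plan is to reduce every assertion to the one-dimensional behaviour of $f$ along line segments inside $K$ and to invoke only the defining (strict) convexity inequality. It suffices to treat the convex case throughout, since each concave statement follows by applying the convex one to $-f$.

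First I would show that a local minimum is global. Let $x_0$ be a point of local minimum, so $f(x)\ge f(x_0)$ for all $x$ in some neighbourhood $U$ of $x_0$, and suppose toward a contradiction that $f(x_1)<f(x_0)$ for some $x_1\in K$. For $\lambda\in(0,1)$ put $x_\lambda=\lambda x_1+(1-\lambda)x_0\in K$; then $x_\lambda\to x_0$ as $\lambda\to0^+$, so $x_\lambda\in U$ for small $\lambda$, whereas convexity gives $f(x_\lambda)\le\lambda f(x_1)+(1-\lambda)f(x_0)<f(x_0)$, contradicting the minimality of $x_0$ on $U$. Hence no such $x_1$ exists. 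The uniqueness claim for strictly convex $f$ is then immediate: if $x_1\ne x_2$ both attain the global minimal value $m$, then $\tfrac12(x_1+x_2)\in K$ and strict convexity forces $f\big(\tfrac12 x_1+\tfrac12 x_2\big)<\tfrac12 m+\tfrac12 m=m$, which is impossible.

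For the last assertion — that a local maximum $x_0$ of a strictly convex $f$ is an extreme point of $K$ — I would argue by contradiction. If $x_0=\lambda x_1+(1-\lambda)x_2$ with $x_1\ne x_2$ and $\lambda\in(0,1)$, then $x_0$ lies strictly between $x_1$ and $x_2$, so the displaced points $y_t:=x_0+t(x_1-x_2)$ remain in the segment $[x_1,x_2]\subseteq K$ for all sufficiently small $|t|$, and for $|t|$ small they also lie in the neighbourhood $U$ on which $f(y)\le f(x_0)$. Fix such a $t\ne0$; then $y_t\ne y_{-t}$ (because $x_1\ne x_2$) and $x_0=\tfrac12 y_t+\tfrac12 y_{-t}$, so strict convexity yields $f(x_0)<\tfrac12 f(y_t)+\tfrac12 f(y_{-t})\le\tfrac12 f(x_0)+\tfrac12 f(x_0)=f(x_0)$, a contradiction. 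Therefore $x_0$ must be an extreme point of $K$.

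The only step needing a little care is this last one, specifically the observation that for a point strictly between $x_1$ and $x_2$ an entire interval of displacements $t(x_1-x_2)$ keeps us inside $[x_1,x_2]$, which is exactly what allows the simultaneous use of $y_t$ and $y_{-t}$ with midpoint $x_0$; once this is set up, the strict-convexity inequality closes the argument at once. Everything else is a direct application of the convexity and strict-convexity inequalities restricted to segments of $K$.
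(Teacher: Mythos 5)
Your proof is correct. Note that the paper does not actually prove Theorem~\ref{conv-extrema}; it is stated in Appendix~\ref{A} as a standard fact with a citation to Roberts--Varberg, so there is no in-paper argument to compare against. Your segment-restriction proof is the standard one and is complete: the first part correctly uses only convexity of $f$ and of $K$ together with $\|x_\lambda-x_0\|=\lambda\|x_1-x_0\|\to 0$ to land inside the neighbourhood $U$; the uniqueness and extreme-point parts correctly invoke strict convexity. You also resolved the one genuine ambiguity in the statement the right way: the ``as well'' clause about local maxima being extreme points must be read under the strict-convexity hypothesis (for merely convex $f$ it fails, e.g.\ for a constant function every point is a local maximum), and your argument via the midpoint identity $x_0=\tfrac12 y_t+\tfrac12 y_{-t}$ with $y_{\pm t}\in[x_1,x_2]\cap U$ is exactly where that hypothesis is needed.
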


\end{document}